\def \epf {\hfill $\Box$\\ \par}
\def\F {{\mathbb{F}}}
\def\bc{{\bf c}}
\def\be{{\bf e}}
\def\bu{{\bf u}}
\def\bv{{\bf v}}
\def\bo{{\bf 0}}
\def\bx{{\bf x}}
\def\bz{{\bf z}}
\def\bs{{\bf s}}
\def\be{{\bf e}}
\def\CC{\mathbb{C}}
\def \mC {\mathcal{C}}
\def \mG {\mathcal{G}}
\def \mC {\mathcal{C}}
\def \mF {\mathcal{F}}
\def \mS {\mathcal{S}}
\def \Pr {{\rm Pr}}
\def\wt{{\rm wt}}
\newcommand{\Ga}{\alpha}
\newcommand{\Gd}{\delta}
\newcommand{\Gl}{\lambda}    
\newcommand{\Gs}{\sigma}     \newcommand{\GS}{\Sigma}
\newtheorem{thm}{Theorem}[section]
\newtheorem{defn}[thm]{Definition}
\newtheorem{lemma}[thm]{Lemma}
\numberwithin{equation}{section}
\begin{document}


\title{On the List-Decodability of Random Self-Orthogonal Codes}

\author{Lingfei Jin, Chaoping Xing and Xiande Zhang
\thanks{L. Jin is with Shanghai Key Laboratory of Intelligent Information Processing, School of Computer Science, Fudan University, Shanghai 200433, China. {\it email:} {lfjin@fudan.edu.cn}. }
\thanks{C. Xing is with Division of
Mathematical Sciences, School of Physical \& Mathematical Sciences,
Nanyang Technological University, Singapore 637371, Republic of
Singapore. {\it email:} {xingcp@ntu.edu.sg}}

\thanks{X. Zhang is with Division of
Mathematical Sciences, School of Physical \& Mathematical Sciences,
Nanyang Technological University, Singapore 637371, Republic of
Singapore. She is also with School of Mathematical Sciences, University of Science and Technology of China, Hefei, Anhui, 230026 China. {\it email:} {xiandezhang@ntu.edu.sg}}
\thanks{The work of C. Xing was supported by the Singapore Ministry of Education under the Tier 1 Grant RG20/13.
X. Zhang is partially supported by NSFC under gratnt 11301503.}}



\maketitle

\begin{abstract} In 2011, Guruswami-H{\aa}stad-Kopparty \cite{Gru} showed that the list-decodability of random linear codes is  as good as that of general random codes. In the present paper, we further strengthen the result by showing that the list-decodability of random {\it Euclidean self-orthogonal} codes is  as good as that of general random codes as well, i.e., achieves the classical Gilbert-Varshamov bound. Specifically,
we show that, for any fixed finite field $\F_q$, error fraction
$\delta\in (0,1-1/q)$ satisfying $1-H_q(\delta)\le \frac12$ and
small  $\epsilon>0$,  with high probability a random Euclidean self-orthogonal
code over $\F_q$ of rate $1-H_q(\delta)-\epsilon$ is $(\delta,
O(1/\epsilon))$-list-decodable. This generalizes the result of
linear codes to Euclidean self-orthogonal codes. In addition, we extend the result to list decoding {\it symplectic dual-containing} codes by showing that the list-decodability of random symplectic dual-containing codes achieves the quantum Gilbert-Varshamov bound as well. This implies that list-decodability of quantum stabilizer codes can achieve the quantum Gilbert-Varshamov bound.
 The counting
argument on self-orthogonal codes is an important ingredient to
prove our result.
\end{abstract}

\begin{keywords}
 List decoding, probability method, self-orthogonal codes, random codes.
\end{keywords}

\section{Introduction}
The notion of list decoding was introduced independently by Elias
and Wozencraft \cite{El, El91, Wo}. Instead of insisting on a unique
output of codeword, in the list decoding model the decoder allows
to  output a list of possible codewords which includes  the actual
transmitted codeword. Compared with the classical unique decoding
model, the model of list decoding  allows  larger number of
corrupted errors. A fundamental problem in coding theory is the
trade-off between the information rate and the fraction of errors
that can be corrected.  For list decoding, we have another important
parameter, i.e., the largest list size of the decoder's output. We
hope the list size to be small.

From the algorithm point of view, a good list decoding algorithm
should have  polynomial time, which means that the list size should
be at most polynomial in the block length of the code. Researchers
have been devoted to finding good list decodable codes  with
efficient list-decoding algorithms due to the wide applications to
complexity theory and more general for computer science
\cite{Gru01,Su,Vad11}, and communications \cite{El91}. The fraction
of errors $\delta $ close to $1-1/q$ is more interesting for
complexity theory,  while it is more attractive for $\delta $ close
to $0$ for communication side. Thus, it is meaningful to consider
the full range of possibilities for $\delta$.

\subsection{The Gilbert-Varshamov bound}
Before starting our paper, we first introduce the Gilbert-Varshamov bound in coding theory that plays a central role in this paper.

For an integer $q\ge 2$, we define the $q$-ary
entropy function by
$H_q(x)=x\log_q(q-1)-x\log_q x-(1-x)\log_q (1-x)$. Then it is easy to verify the identity $H_{q^2}(x)=\frac12H_q(x)+\frac12x\log_q(q+1)$.  It has been proved that, with high probability, a random $q$-ary classical block code (and a random $q$-ary classical linear block code, respectively) of sufficiently large length with rate $R$ and relative  Hamming  minimum distance $\delta$ satisfies the following $q$-ary classical Gilbert-Varshamov bound \cite{Var}
\begin{equation}\label{eq:1.1} R\ge 1-H_q(\Gd).\end{equation}
 Similarly, with high probability, a random $q$-ary quantum code  of sufficiently large length with rate $R$ and relative  symplectic minimum distance $\delta$ satisfies the following $q$-ary quantum Gilbert-Varshamov bound \cite{Ash Kni}
\begin{equation}\label{eq:1.1} R\ge 1-H_q(\Gd)-\Gd\log_q(q+1).\end{equation}

\subsection{Status of list decoding random codes}

It is well known that the list-decodability of classical block codes is upper bounded by the  classical Gilbert-Varshamov bound (see \cite{Gru01}), i.e., the  tolerance error rate $\delta\leq H_q^{-1}(1-R)$. On the other hand, it
was shown in \cite{El91} that for  a random code  with rate $R\leq
1-H_q(\Gd)-1/L$, it is $(\delta, L)$-list-decodable with probability
at least $1-q^{-\Omega(n)}$. However, it is not obvious to
generalize this result to linear codes.

Zyablov-Pinsker \cite{ZP} established an optimal tradeoff between
the rate $R$ and the fraction of errors $\delta$ for binary linear
codes. The results in \cite{ZP} can be easily generalized to $q$-ary
codes which shows that the minimum list size of a linear code with
rate $1-H_q(\delta)-\epsilon$ is bounded by $\exp
(O_q(1/\epsilon))$. But this bound is exponentially worse than the
bound $O(1/\epsilon)$ for arbitrary codes.

In \cite{Gru02}, Guruswami-H{\aa}stad-Sudan-Zuckerman showed
existence of $(\delta,1/\varepsilon)$-list-decodable linear codes of
rate at least $1-H_2(\delta)-\varepsilon$ for binary codes. Although
the extension of  the results in \cite{Gru02} to larger alphabets is
not easy, Guruswami-H{\aa}stad-Kopparty \cite{Gru} finally managed
to show that a list size of $O_q(1/\epsilon)$ suffices to have rate
within $\varepsilon$ of the information-theoretically optimal rate
of $1 - H_q(\Gd)$. This means that the list-decodability of random
linear codes is as good as that of general codes. In the latest
development, M. Wootters \cite{W13} improved the constant in the
list size $O_q(1/\epsilon)$ for random linear codes when the
decoding radius $\Gd$ is close to $1-1/q$.
\subsection{Motivation}
It is well known that (symplectic) self-orthogonal codes form a useful and
important class  of linear codes which have found wide applications
in communications \cite{HST99,NRS05}, multiplicative secret sharing
\cite{CCGHV07} and quantum codes \cite{CRSS98}, etc.. It is natural
to ask the question about how well one can list decode a random (symplectic)
self-orthogonal code or dual-containing code (a symplectic dual-containing code is a subspace of $\F_q^{2n}$ that contains its dual under the symplectic inner product).

Euclidean self-orthogonal codes are extensively used for construction of linear multiplicative secret sharing
\cite{CCGHV07}. In the event that some dishonest players provide fault shares, we have to carry on error correction to recover the secret. In this scenario, one has to consider decoding of Euclidean self-orthogonal codes.

In quantum coding theory, decoding of a quantum stabilizer code $Q$ obtained from a classical self-orthogonal code $C$ can be reduced to  decoding of the symplectic dual code $C^{\perp_S}$ (see Section III.C for details). Therefore, list decoding of dual-containing codes with symplectic inner product plays an important role on quantum decoding.

\subsection{Our work and techniques}

In this work, we focus on list decoding of Euclidean self-orthogonal and symplectic dual-containing codes. Surprisingly, our results show that
the list-decodability of random Euclidean self-orthogonal codes and symplectic dual-containing codes  are  as good as
that of general random codes and random linear codes, namely, the list-decodability of random Euclidean self-orthogonal codes and symplectic dual-containing codes achieves the classical and quantum Gilbert-Varshamov bounds, respectively. Furthermore, we show that the list decodability of symplectic dual-containing is upper bounded by the quantum Gilbert-Varshamov bound, namely, every symplectic dual-containing code with decoding radius $\delta$ and rate bigger than $ 1-H_q(\Gd)-\Gd\log_q(q+1)$ must have exponential list size.

A main technique is the powerful probabilistic  fact which says that
there is a limited correlation between the linear spaces and Hamming
balls. More precisely, it is unlikely that the intersection of a
linear subspace spanned by $t$ random vectors from a Hamming ball
has size more than $\Omega(t)$. This fact was used in \cite{Gru} and
is also an important ingredient in our proof.

Apart from the above fact, the counting idea on Euclidean (symplectic) self-orthogonal
linearly independent vectors and spaces by using solutions of
quadratic forms is of great important in computation of
probability.

\subsection{Organization}

The organization of this paper is as follows. We first review some
basic results on self-orthogonal codes and quadratic forms in
Subsections II.A and II.B. In Subsection II.3, we briefly discuss construction of random
Euclidean self-orthogonal codes based on quadratic forms. Subsection II.D presents list decoding and the Main Theorem I. Subsection II.E is fully
devoted to a proof of our Main Theorem I, i.e., Theorem \ref{thm:5}.
In Subsection II.F, we prove a lemma on  the number of certain
self-orthogonal spaces that is used in the proof of Theorem
\ref{thm:5}. Section III studies list decoding of symplectic dual-containing codes. We present a connection between decoding quantum stabilizer codes and symplectic dual-containing codes in Subsection III.C. Then we show that  list decodability of symplectic dual-containing is upper bounded by the quantum Gilbert-Varshamov bound in Subsection III.D.  Finally in Subsection III.E we prove our Main Theorem II which says that the list decodability of symplectic dual-containing codes achieves the quantum Gilbert-Varshamov bound.

\section{List decoding of Euclidean self-orthogonal codes}
\subsection{Euclidean self-orthogonal codes}
Let us quickly recall some basic concepts and results in coding
theory. As we focus on self-orthogonal codes  which are always
linear, we assume from now on that $q$ is a prime power and denote
by $\F_q$ the finite field of $q$ elements. A $q$-ary $[n,k]$-linear
code $C$ is a subspace of $\F_q^n$ with dimension $k$, where $n$
and $k$ are called the length and dimension of the code $C$,
respectively.  The information rate of the code $C$ is $R=k/n$ in
this case.

Two vectors $\bu$ and $\bv$ are said Euclidean orthogonal if $\langle
\bu,\bv\rangle=\sum_{i=1}^nu_iv_i=0$. A vector $\bu$ is said Euclidean
self-orthogonal if $\langle \bu,\bu\rangle=0$. The Euclidean dual code
$C^{\perp_E}$ of a linear code $C$ consists of all vectors in $\F_q^n$
that are orthogonal to every  codeword in $C$. A subset
$\{\bv_1,\dots,\bv_t\}$ of $\F_q^n$ is called Euclidean self-orthogonal if
$\langle \bv_i,\bv_j\rangle=0$ for all $1\le i,j\le t$.

A linear code $C$ is said Euclidean self-orthogonal if $C\subseteq C^{\perp_E}$.
It is easy to see that any Euclidean self-orthogonal code has dimension $k\leq
\frac{n}{2}$. Hence  a self-orthogonal code has information rate
$0\leq R\leq 1/2$.

\subsection{Quadratic forms}

An $n$-variate quadratic form over $\F_q$ is a zero polynomial or
homogeneous polynomial of degree $2$ in $n$ variables with
coefficients in $\F_q$, i.e., \[f(\bx)=f(x_1,\dots,
x_n)=\sum_{i,j=1}^n a_{ij}x_ix_j,\quad a_{ij}\in\F_q.\]

 A fundamental problem in the theory of quadratic form is how much one can simplify $f(\bx)$ by means of nonsingular linear transformation of indeterminates. Two quadratic forms $f(\bx)$ and $g(\bx)$ are said {\it equivalent} if there exists a nonsingular $n\times n$ matrix $A$ such that the quadratic form $f(\bx A)$ is equal to $g(\bx)$. It is easy to verify that this is indeed an equivalence relation. Furthermore, two equivalent quadratic forms have the same number of zeros. For a nonzero quadratic form $f(\bx)$,
 the smallest number $m$ for which $f(\bx)$ is not equivalent to a quadratic form in fewer than $m$ indeterminates  is called the {\it rank} of $f(\bx)$. The rank of the zero quadratic is defined to be $0$. If the rank of a nonzero quadratic form $f(\bx)$ is $n$, then $f(\bx)$ is called {\it non-degenerate}. If $\F_q$ has an odd characteristic, then a quadratic form $f(\bx)$ can be written as
$f(\bx)=\bx B\bx^T$ for a symmetric matrix $B$ of size $n$ over
$\F_q$. The rank of $B$ is in fact equal to the rank of $f(\bx)$.
The reader may refer to \cite[pages 278-289]{Nei} for the details
about quadratic forms.

For the purpose of this paper,  we are mainly interested in the
number of solutions of $f(\bx)=0$  for a quadratic form $f(\bx)$. We
combine several results in \cite[Section 6.2]{Nei} in the following
lemma.
\begin{lemma}\label{lem:2}
Let $f(\bx):=f(x_1,\dots, x_n)$ be a quadratic form defined over
$\F_q$ with rank $m$. Denote by $N(f(\bx)=0)$ the number of
solutions of $f(\bx)=0$ in $\F_q^n$. If $m=0$, then
$N(f(\bx)=0)=q^n$.   If $1\leq m\leq n$, then
\begin{eqnarray}\label{eq:1}
N(f(\bx)=0)=
\begin{cases}
q^{n-1} & \mbox{$m$ is odd};\\
q^{n-1}\pm(q-1)q^{n-\frac{m}{2}-1} & \mbox{$m$ is even}.\\
\end{cases}
\end{eqnarray}
\end{lemma}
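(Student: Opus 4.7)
The plan is to prove the lemma in three stages: (i) reduce to a non-degenerate form on $\F_q^m$ using only the definition of rank; (ii) reduce that non-degenerate form to a canonical shape via the classification of quadratic forms over $\F_q$; and (iii) count zeros of each canonical shape explicitly.

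For (i), the definition of rank yields a nonsingular $n\times n$ matrix $A$ with $f(\bx A)=g(y_1,\ldots,y_m)$ involving only $m$ indeterminates (after relabelling coordinates of $\by=\bx A$). Since $\bx\mapsto \bx A$ is a bijection of $\F_q^n$ the zero count is preserved, and since the last $n-m$ coordinates of $\by$ are free, $N(f=0)=q^{n-m}N_m(g=0)$, where $N_m$ counts zeros of $g$ in $\F_q^m$. The case $m=0$ is immediate, and it remains to show $N_m(g=0)=q^{m-1}$ for odd $m$ and $N_m(g=0)=q^{m-1}\pm(q-1)q^{m/2-1}$ for even $m$.

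For (ii), in odd characteristic I would write $g(\by)=\by B\by^T$ for a symmetric invertible $m\times m$ matrix $B$, diagonalize $B$ by congruence, and rescale to bring $g$ into the shape $y_1^2+\cdots+y_m^2$ or $y_1^2+\cdots+y_{m-1}^2+\alpha y_m^2$ for a fixed non-square $\alpha\in\F_q^*$; an analogous classification handles characteristic two. For (iii), I would count inductively by peeling off $y_m$ and, for each value of the truncated form, counting solutions of a univariate quadratic in $y_m$, which reduces to counting squares in $\F_q$. Equivalently, one can use the additive-character identity
\[
N_m(g=0)=q^{m-1}+\tfrac{1}{q}\sum_{t\in\F_q^*}\sum_{\by\in\F_q^m}\psi(tg(\by))
\]
for a nontrivial $\psi$, and evaluate the inner sum by the classical Gauss-sum formula $\sum_{y\in\F_q}\psi(ty^2)=\eta(t)G(\eta,\psi)$ with $\eta$ the quadratic character; the product of $m$ such factors contributes $0$ on summing over $t$ when $m$ is odd and $\pm(q-1)q^{m/2-1}$ when $m$ is even.

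The main obstacle is the even-$m$ case of step (iii): pinning down the sign requires tracking the quadratic character of the discriminant of $g$ (equivalently, whether the canonical form is hyperbolic or its twist). Since all these ingredients are standard from \cite[Section 6.2]{Nei}, the real work is to assemble them and verify that the extra factor $q^{n-m}$ from step (i) matches the statement in (\ref{eq:1}), rather than to prove the Gauss-sum machinery from scratch.
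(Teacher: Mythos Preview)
Your sketch is correct and follows the standard route (reduction to a nondegenerate form in $m$ variables, classification of such forms, and evaluation of zero counts via Gauss sums or induction); this is essentially the argument in \cite[Section 6.2]{Nei}. Note, however, that the paper does not give its own proof of this lemma at all --- it simply states the result as a combination of facts from \cite[Section 6.2]{Nei} --- so there is no in-paper proof to compare against, and your write-up is already more detailed than what the paper provides.
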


\subsection{Construction of random Euclidean self-orthogonal  codes}
Unlike construction of a random linear code where one can choose a
random set of linearly independent vectors, construction of a random Euclidean
self-orthogonal code is not straightforward. In this part, we
briefly discuss  construction of  random Euclidean self-orthogonal codes
through quadratic forms.

Note that construction of a  random Euclidean self-orthogonal code is
equivalent to finding a linearly independent set
$\{\bv_1,\dots,\bv_k\}$ of random Euclidean self-orthogonal vectors.

We first choose a nonzero random solution
$\bv_1=(v_{11},\dots,v_{1n})$ of the quadratic equation
$x_1^2+\cdots+x_n^2=0$ (note that this equation has at least
$q^{n-2}$ solutions by Lemma \ref{lem:2}). Then $\bv_1$ is
self-orthogonal. Assume that we have already found a linearly
independent set $\{\bv_1,\dots,\bv_{k-1}\}$ of random Euclidean
self-orthogonal vectors. If we want to find a $k$th vector
$\bv_{k}=(v_{k1},\dots,v_{kn})$, then $(v_{k1},\dots,v_{kn})$ is a
solution  of the following  equation system
\begin{equation}\label{eq:02}\left\{
\begin{array}{l}
v_{11}x_1+\dots+v_{1n}x_n=0,\\
\vdots\\
v_{k-1,1}x_1+\dots+v_{k-1,n}x_n=0,\\
x_1^2+\dots+x_n^2=0.
\end{array}
\right.\end{equation}

Substituting the first $k-1$ equations of (\ref{eq:02}) into the
last equation of (\ref{eq:02}), we obtain a  quadratic  equation
$g(x_{i_1},\dots, x_{i_{n-k+1}})=0$ of $n-k+1$ variables. Thus, as
long as $N(g(x_{i_1},\dots, x_{i_{n-k+1}})=0)$ is bigger than the
cardinality of ${\rm span}\{\bv_1,\dots,\bv_{k-1}\}$, i.e,
$N(g(x_{i_1},\dots, x_{i_{n-k+1}})=0)>q^{k-1}$, we can randomly
choose a solution $\bv_k$ of (\ref{eq:02}) which is not contained in
${\rm span}\{\bv_1,\dots,\bv_{k-1}\}$ (note that the number of
solution of (\ref{eq:02}) is equal to $N(g(x_{i_1},\dots,
x_{i_{n-k+1}})=0))$. Hence, we obtain  a linearly independent set
$\{\bv_1,\dots,\bv_{k-1},\bv_k\}$ of random self-orthogonal vectors.

On the other hand, by Lemma \ref{lem:2}, the number
$N(g(x_{i_1},\dots, x_{i_{n-k+1}})=0)$ of solutions of $
g(x_{i_1},\dots, x_{i_{n-k+1}})=0$ is at least $q^{n-k-1}$. Thus, as
long as $q^{n-k-1}>q^{k-1}$, i.e, $k\le (n-1)/2$, we can proceed to
the next step to get a basis $\{\bv_1,\dots,\bv_{k-1},\bv_k\}$.

\subsection{List decoding random Euclidean self-orthogonal codes}

First of all, we assume that our channel has adversarial noise. In
other words, the channel can arbitrarily corrupt any subset of up to
a certain number of symbols of a codeword. Our goal is to correct
such errors and recover the original message/codeword efficiently.
An error-correcting code $C$ of block length $n$ over a finite
alphabet $\GS$ of size $q$ maps a set of messages into codewords in
$\GS^n$. The rate of the code $C$ is defined to be
$R:=R(C)=\frac{\log_q|C|}n$.

The formal definition of list decoding can be stated combinatorially
in the following way.
\begin{defn}\label{def:1.1}{\rm For  integers $q\ge 2$, $L\ge 1$ and a real $\Gd\in (0,1-1/q)$, a $q$-ary code $C$ of length $n$ over a code alphabet  $\Sigma$ of size $q$ is called $(\Gd,L)$-list-decodable if, for every point $\bx\in \Sigma^n$, there are at most $L$ codewords whose Hamming distance from $\bx$ is at most $\Gd n$.
}\end{defn}

Note that while considering $(\delta,
L)$-list-decodability, we always restrict the fraction $\delta<1-1/q
$ since decoding from a fraction of $1-1/q$  or more errors is
impossible except for the trivial codes. If we want a polynomial size list, the largest rate $R$ of the code
that one can hope is $1-H_q(\delta)$ \cite{Gru02,El,El91,ZP}.

The proof of our main theorem (Theorem \ref{thm:5}) combines an idea
used for random linear codes \cite{Gru} and  the counting result on
self-orthogonal linearly independent vectors and spaces by using
solutions of quadratic forms.

\begin{thm}\label{thm:5}(Main Theorem I)
For every prime power $q$ and a real $\delta \in (0,1-1/q)$
satisfying $1-H(\delta)\leq 1/2$, there exists a constant
$M_\delta$,  such that for  small $\varepsilon>0$ and all large
enough $n$, a random self-orthogonal code $C\subseteq \F_q^n$ of
rate $R=1-H(\delta)-\varepsilon$  is
$(\delta,\dfrac{M_\delta}{\varepsilon})$-list-decodable with
probability $1-q^{-n}$.
\end{thm}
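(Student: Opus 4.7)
The plan is to adapt the Guruswami--H{\aa}stad--Kopparty argument \cite{Gru} for random linear codes, replacing the uniform measure on all $k$-dimensional subspaces of $\F_q^n$ by the uniform measure on the set of Euclidean self-orthogonal $k$-dimensional subspaces (whose cardinality will be computed in Subsection II.F). For a fixed center $\by\in\F_q^n$, I aim to prove
\[\Pr\big[|C\cap B(\by,\Gd n)|>L\big]\le q^{-2n},\]
where $L=M_\Gd/\Ge$; a union bound over the $q^n$ possible centers then yields the $1-q^{-n}$ success probability stated in the theorem.

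Conditional on the bad event above, the ``limited correlation'' reduction of \cite{Gru} says that a uniformly random $t$-tuple $\bv_1,\dots,\bv_t$ drawn from $C\cap B(\by,\Gd n)$ is, with constant probability, linearly independent and satisfies $|\langle\bv_1,\dots,\bv_t\rangle\cap B(\by,\Gd n)|=\Omega(L/t)$, for $t=\Theta(\log_q L)$. Since $C$ is self-orthogonal, the subspace $\langle\bv_1,\dots,\bv_t\rangle$ must itself be self-orthogonal. A union bound then reduces the task to estimating
\[\sum_{(\bv_1,\dots,\bv_t)}\Pr\big[\langle\bv_1,\dots,\bv_t\rangle\subseteq C\big]\]
over ``bad'' tuples of self-orthogonal vectors in $B(\by,\Gd n)$ whose span meets $B(\by,\Gd n)$ in $\Omega(L/t)$ points. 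Now self-orthogonal $k$-subspaces of $\F_q^n$ containing a fixed self-orthogonal $t$-subspace $V$ are in bijection with self-orthogonal $(k-t)$-subspaces of the nondegenerate orthogonal quotient $V^{\perp_E}/V$ of dimension $n-2t$, so each individual probability equals a ratio of two counts produced by the lemma of Subsection II.F. Invoking that lemma together with Lemma \ref{lem:2}, I expect this ratio to behave like $q^{-t(n-k)+O(t^2)}$, matching the random-linear-code case up to a lower-order factor that can be absorbed into the constant $M_\Gd$. Combining this with the standard volume bound $|B(\by,\Gd n)|^t\le q^{tnH_q(\Gd)}$ gives a total contribution of $q^{tn(H_q(\Gd)-(1-R))+O(t^2)}=q^{-tn\Ge+O(t^2)}$, which falls below $q^{-2n}$ once $t\gtrsim 2/\Ge$, yielding the list size $L=O(1/\Ge)$ promised by the theorem.

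The hypothesis $1-H_q(\Gd)\le 1/2$, i.e.\ $R\le 1/2$, is essential throughout: it ensures $k\le n/2$ so that random self-orthogonal codes of the required rate actually exist (see Subsection II.C), and it provides enough slack in Lemma \ref{lem:2} at every step of the incremental extension argument used to count self-orthogonal subspaces. The main obstacle I foresee is the uniform ratio estimate above. Unlike the linear case, where the extension count of a $t$-subspace to a $k$-subspace is simply a $q$-binomial coefficient, the number of self-orthogonal extensions depends on the rank and type (hyperbolic versus elliptic) of the quadratic form induced on $V^{\perp_E}/V$, and one must argue that this dependence contributes only an $O(t^2)$ error in the exponent, uniformly over all self-orthogonal $t$-tuples arising in the union bound. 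This is precisely where the new self-orthogonal counting input from Subsection II.F is needed, and where the proof departs from the linear-code argument in \cite{Gru}.
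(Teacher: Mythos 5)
Your overall skeleton—union bound over $q^n$ centers, then union bound over bad tuples, then a counting lemma for the probability that a random self-orthogonal code contains a given span—matches the spirit of \cite{Gru} and of the paper, but you have missed a pivotal reduction that forces the paper's counting lemma to look quite different from the one you are anticipating, and this gap is not cosmetic.

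The paper does not leave the ball centered at $\by$; it writes $|B_n(\bx,\delta)\cap C|=|B_n(\bo,\delta)\cap(C+\bx)|$, observes $C+\bx\subseteq\mathrm{span}\{C,\bx\}$, and passes to a random $(Rn+1)$-dimensional subspace $C^*$ in the class $\mC_{Rn+1}^*$ of codes that merely \emph{contain a self-orthogonal subcode of codimension one}. This step is forced: Lemma~\ref{lem:3} is a statement about linear spans (through the origin) intersected with the origin-centered ball $B_n(\bo,\delta)$, so keeping the center at $\by$ as you propose leaves you with an affine ball against a linear span, and the correlation bound does not apply. Once the center is shifted, the points of $B_n(\bo,\delta)\cap C^*$ are no longer codewords of the self-orthogonal code $C$ (they are shifted by $\bx$), so the tuples $\bv_1,\dots,\bv_t$ appearing in the union bound are \emph{not} self-orthogonal, and $\mathrm{span}\{\bv_1,\dots,\bv_t\}$ need not be a totally isotropic subspace. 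Your claim that "since $C$ is self-orthogonal, the subspace $\langle\bv_1,\dots,\bv_t\rangle$ must itself be self-orthogonal," and the bijection with self-orthogonal subspaces of the quotient $V^{\perp_E}/V$, therefore cannot be the basis of the counting lemma: the paper's Lemma~\ref{lem:4} instead bounds $\Pr_{C^*\in\mC_k^*}[\{\bv_1,\dots,\bv_t\}\subseteq C^*]$ for \emph{arbitrary} linearly independent $\bv_i$, via a two-case count (whether a maximal self-orthogonal subset of $\mathrm{span}\{\bv_i\}$ has full or deficient dimension, using Lemma~\ref{lem:6} to extend it to a codimension-one self-orthogonal subcode). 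This also sidesteps the hyperbolic/elliptic type-tracking on $V^{\perp_E}/V$ that you flag as the main obstacle. Finally, your "with constant probability, a random $t$-tuple from $C\cap B(\by,\delta n)$ spans $\Omega(L/t)$ ball points" is not the form of the correlation lemma used here; Lemma~\ref{lem:3} is a small-failure-probability bound on tuples drawn from $B_n(\bo,\delta)$, used only to bound $|\mF_t|$ for $t<5/\epsilon$, while the maximal-linearly-independent-set argument that selects the bad tuple is purely deterministic.
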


The first step in the proof of Theorem \ref{thm:5}  is to reduce the
problem of the list-decodability of a random Euclidean self-orthogonal  code to
the problem of studying the weight distribution of certain random
linear code  containing a given Euclidean self-orthogonal code.

We quote a result from \cite{Gru} below where $B_n(\bx,\delta)$ denotes the Hamming ball with center $\bx\in
\F_q^n$ and radius $\delta n$.

\begin{lemma}\label{lem:3}
For every $\delta\in(0,1-1/q)$, there is a constant $M>1$ such that
for all $n$ and all $t=o(\sqrt{n})$, if $X_1,\dots, X_t$ are picked
independently and uniformly at random from $B_n(\bo,\delta)$, then

\[\Pr[|{\rm span}({X_1,\dots, X_{t}})\cap B_n(\bo,\delta)|\geq M\cdot t]\leq q^{-(6-o(1))n}.\]

\end{lemma}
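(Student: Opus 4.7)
The plan is to use a union bound over $L=Mt$-sized sets of coefficient vectors $\ba\in\F_q^t$ specifying the linear combinations in $\mathrm{span}(X_1,\dots,X_t)\cap B_n(\bo,\delta)$, then switch from the ball-uniform distribution on the $X_i$'s to the uniform distribution on $\F_q^n$, and finally apply a multivariate large-deviation estimate to control the probability that $L$ random linear combinations of the $X_i$'s all land in the ball simultaneously.

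If the bad event occurs, then setting $L:=Mt$ there exist $L$ distinct nonzero coefficient vectors $\ba^{(1)},\dots,\ba^{(L)}\in\F_q^t$ such that $Y_j:=\sum_{i=1}^t a^{(j)}_iX_i\in B_n(\bo,\delta)$ for every $j$. Writing $X$ for the $n\times t$ matrix with columns $X_i$, a union bound over the set $S=\{\ba^{(1)},\dots,\ba^{(L)}\}$ gives
\[
\Pr[\mathrm{bad}]\ \le\ \binom{q^t-1}{L}\cdot \max_{S}\, \Pr\bigl[\forall\, \ba\in S:\, X\ba\in B_n(\bo,\delta)\bigr]\ \le\ q^{tL}\cdot \max_S \Pr[\cdots],
\]
and the change of measure from $X_i\sim\mathrm{Unif}(B_n(\bo,\delta))$ to $X_i\sim\mathrm{Unif}(\F_q^n)$ inflates by at most $(q^n/|B_n(\bo,\delta)|)^t\le q^{(1-H_q(\delta)+o(1))nt}$. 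It then suffices to bound $\Pr_{X\text{ uniform}}[\forall\, \ba\in S:\, X\ba\in B_n(\bo,\delta)]$ for each fixed $S$ of rank $r\le t$.

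Under the uniform measure the coordinate vectors $Z_\ell:=(X_1[\ell],\dots,X_t[\ell])$ are i.i.d.\ uniform on $\F_q^t$, so for any nonzero $\ba$ the indicators $\mathbf{1}[\ba\cdot Z_\ell\ne 0]$ are i.i.d.\ Bernoulli$(1-1/q)$ and $\Pr[\wt(X\ba)\le\delta n]=q^{-(1-H_q(\delta)+o(1))n}$ by a standard Chernoff bound. For the joint event over all of $S$, I would apply the multivariate Chernoff estimate
\[
\Pr\bigl[\forall\ba\in S:\, \wt(X\ba)\le\delta n\bigr]\le\inf_{\lambda>0}e^{\lambda L\delta n}\,\E\bigl[e^{-\lambda\Phi(Z)}\bigr]^n,\qquad \Phi(z):=|\{\ba\in S:\ba\cdot z\ne 0\}|,
\]
and unpack the moment generating function via the pattern-counting identity $\E[e^{-\lambda\Phi(Z)}]=q^{-t}\sum_{T\subseteq S}(e^{-\lambda})^{L-|T|}(1-e^{-\lambda})^{|T|}q^{\,t-\mathrm{rank}(T)}$, optimizing $\lambda$ against the configuration of $S$. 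Combined with the inflation $q^{tL+(1-H_q(\delta))nt}$ — where $tL=Mt^2=o(n)$ by the hypothesis $t=o(\sqrt{n})$ — a sufficiently large constant $M=M_\delta$ converts the savings into the claimed bound $q^{-(6-o(1))n}$.

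The main obstacle is the necessarily low-rank regime $r\le t\ll L$: since $L=Mt$ exceeds $t$, at most $t$ of the $\ba^{(j)}$'s can be linearly independent, so the naive estimate based only on $r$ linearly independent events yields savings $q^{-r(1-H_q(\delta))n}\le q^{-t(1-H_q(\delta))n}$, which barely compensates the change-of-measure penalty $q^{(1-H_q(\delta))nt}$ and leaves no room for the extra $q^{-6n}$ factor. The key technical point is therefore to extract, from the pattern-counting identity above, an effective savings that genuinely scales with $L$ rather than with $r$ — intuitively, even though the $L-r$ ``extra'' constraints are linear consequences of the first $r$, requiring all $L$ linear combinations to be low-weight simultaneously amortizes across the $n$ coordinates into an $L$-fold exponential saving whenever $S$ is sufficiently spread, with the degenerate configurations absorbed into the union-bound slack. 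Carefully choosing $M$ in terms of $\delta$ and tracking the dependence on the rank profile of $S$ is what ultimately produces the constant $6$ in the exponent.
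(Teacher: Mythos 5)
The paper does not prove Lemma~\ref{lem:3} at all: it is quoted verbatim from Guruswami--H{\aa}stad--Kopparty \cite{Gru}, and the authors explicitly remark that the cited proof ``involves a Ramsey-theoretical lemma.'' Your route --- union bound over the $L$-element set $S$ of coefficient vectors, change of measure from ball-uniform to uniform on $\F_q^n$, then a coordinatewise moment-generating-function estimate --- is a genuinely different one.

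Unfortunately the proposal contains exactly the gap you flag and does not close it. After the change of measure you pay $q^{t(1-H_q(\delta)+o(1))n}$, so you need $\Pr_{X\,\mathrm{unif}}\bigl[\forall\,\ba\in S:\ X\ba\in B_n(\bo,\delta)\bigr]\le q^{-(t(1-H_q(\delta))+6+o(1))n}$. But $S\subseteq\F_q^t$ has rank $r\le t$, so the $L$ events are algebraically determined by $r$ of them, and the only rigorous consequence of your (correct) pattern-counting identity is a bound scaling with $\mathrm{rank}(T)\le t$, not with $|T|$. In the dominant terms of $\sum_{T\subseteq S}e^{-\lambda(L-|T|)}(1-e^{-\lambda})^{|T|}q^{-\mathrm{rank}(T)}$, $|T|$ is much larger than $t$ while $\mathrm{rank}(T)$ is stuck at $\le t$; optimizing $\lambda$ against $e^{\lambda L\delta n}$ then recovers only the $q^{-r(1-H_q(\delta))n}$ savings you already conceded are insufficient. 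The claim that the identity ``amortizes\dots into an $L$-fold exponential saving'' is the whole theorem, asserted rather than proved, and the ``degenerate configurations'' you wave into the union-bound slack are in fact the generic ones, since every $T$ has rank capped at $t$.

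The cited proof sidesteps this by never changing measure. Guruswami--H{\aa}stad--Kopparty first prove a Ramsey-type lemma: among any $L$ distinct nonzero vectors in $\F_q^t$ one can extract an ``increasing'' subsequence of length $\GO(\log_q L)$, in which each successive vector has a nonzero entry in a coordinate where all earlier ones vanish, hence involves a fresh $X_k$. Conditioning along this chain, each selected constraint forces a ball-uniform $X_k$ into a shifted and dilated copy of the ball; a ball--ball intersection estimate bounds this by $q^{-(1-H_q(\delta)-o(1))n}$. Taking $M$ (and hence $L\ge Mt$) large enough makes the extracted chain long enough that the product of these probabilities, times the $q^{o(n)}$ union bound over chains and $t$, yields $q^{-(6-o(1))n}$. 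Without a structural input of this kind on $S$, your outline does not reach the stated bound.
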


This lemma shows that if we randomly pick $t$ vectors from the
Hamming ball $B_n(\bo,\delta)$, where $t$ is a constant depending on
the list size $L$, the probability that more than $\Omega(t)$
vectors in the span of these $t$ vectors lies in the ball
$B_n(\bo,\delta)$ is quite small. The detail of the proof for  this
lemma is given in \cite{Gru}. The key technique for proving this
lemma involves a Ramsey-theoretical lemma. Similar result for symplectic distance can be easily reduced to the case of Hamming distance by considering codes with alphabet size $q^2$ (see Section III).

The second step in our proof uses the following result on the
probability that a random linear code of dimension $k$ contains a
self-orthogonal subcode of dimension $k-1$ and a given set
$\{\bv_1,\dots,\bv_t\}\subseteq\F_q^n$ of linearly independent
vectors. Let $\mC_k^*$ denote the set of $q$-ary $[n,k]$-linear
codes in which every code contains a self-orthogonal subcode of
dimension $k-1$.

\begin{lemma}\label{lem:4}
For any linearly independent vectors $\bv_1,\dots,\bv_t$ in $\F_q^n$
with $t\leq k<n/2$, the probability that a random code $C^*$ from
$\mC_k^*$ contains $\{\bv_1,\dots,\bv_t\}$ is
\begin{eqnarray*}
&&\Pr_{C^*\in\mC_k^*}[\{\bv_1,\dots,\bv_t\}\subseteq C^*]\\
&& \leq
\begin{cases}
  q^{(k+t-n-1)t+2k-1} & \mbox{if  $q$ is even};\\
q^{(k+t-n-2)t+4k-2} & \mbox{if  $q$ is odd}.\\
 \end{cases}
 \end{eqnarray*}
 Hence, we always have \begin{equation}\label{eq:2} \Pr_{C^*\in\mC_k^*}[\{\bv_1,\dots,\bv_t\}\subseteq C^*]\le  q^{(k+t-n-2)t+4k-1}.\end{equation}
 \end{lemma}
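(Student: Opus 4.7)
The plan is to model a random code $C^*\in \mC_k^*$ via the two-step procedure suggested by Subsection II.C: first generate a $(k-1)$-dimensional self-orthogonal subspace $S$, then extend $S$ by a vector outside $S$ to produce $C^*=S+\langle\bw\rangle$. Under this model the desired probability equals the ratio of the number of flags $(S,C^*)$ satisfying $\mV\subseteq C^*$ to the total number of such flags, where $\mV:=\mathrm{span}(\bv_1,\dots,\bv_t)$. Because $\mV+S\subseteq C^*$ and $\dim\mV=t$, $\dim S=k-1$, $\dim C^*=k$, the dimension $\dim(\mV\cap S)$ is forced to equal either $t$ or $t-1$, and I treat the two cases separately.

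In Case A ($\mV\subseteq S$), the subspace $\mV$ itself must be self-orthogonal; I count $(k-1)$-dimensional self-orthogonal $S$ containing $\mV$, and each such $S$ contributes $(q^{n-k+1}-1)/(q-1)$ extensions to $C^*$. In Case B ($\dim(\mV\cap S)=t-1$), I sum over the $(t-1)$-dimensional self-orthogonal subspaces $\mU\subseteq \mV$, count $(k-1)$-dimensional self-orthogonal $S\supseteq\mU$ (dropping the additional condition $S\cap\mV=\mU$ as an upper bound), and observe that $C^*=S+\mV$ is then uniquely determined and is $k$-dimensional. The key combinatorial input for both cases is the number of $(k-1)$-dimensional self-orthogonal subspaces of $\F_q^n$ extending a given self-orthogonal subspace of dimension $j\in\{t-1,t\}$. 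I obtain this by iterating the quadric-intersection step of II.C: at the $i$th extension, the admissible extending vectors lie in the zero set of the quadratic form $\sum_\ell x_\ell^2$ restricted to a linear subspace of dimension $n-j-i+1$, whose size is controlled by Lemma \ref{lem:2}. Taking the product of these bounds across all $k-1-j$ extension steps and dividing by the analogous expression for $|\mF|$ (built starting from $j=0$), most $q$-powers cancel, and the surviving exponent works out to $(k+t-n-1)t+2k-1$ when $q$ is even and to $(k+t-n-2)t+4k-2$ when $q$ is odd. The unified bound \eqref{eq:2} then follows from the trivial relaxation $q^{(k+t-n-1)t+2k-1}\le q^{(k+t-n-2)t+4k-1}$ (valid since $t\le k<2k$).

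The main obstacle is parity bookkeeping. Lemma \ref{lem:2} gives qualitatively different counts for odd versus even rank of the residual quadratic form, and this rank can shift by one whenever a new linear constraint is imposed; I must track the evolution of the rank carefully so that the product of $k-1$ applications of Lemma \ref{lem:2} collapses cleanly to a single $q$-power rather than accumulating $\pm(q-1)q^{\ldots}$ correction terms of comparable magnitude. In addition, over characteristic $2$ the symmetric bilinear form associated to $\sum_\ell x_\ell^2$ is alternating (indeed $\sum_\ell x_\ell^2=(\sum_\ell x_\ell)^2$), causing an extra rank drop at each step; this is precisely the phenomenon that forces the distinction between even and odd $q$ in the statement. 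Controlling these parity corrections across all $k-1$ extension steps without inflating the exponent beyond the $O(k)$ additive slack already built into the bound is the delicate part of the argument.
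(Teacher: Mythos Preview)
Your plan is essentially the paper's own argument: lower-bound $|\mC_k^*|$ and upper-bound the number of $C^*\in\mC_k^*$ containing $\{\bv_1,\dots,\bv_t\}$ by iterating the quadric-extension count from Lemma~\ref{lem:2}, with the numerator split into exactly your Cases~A and~B (these are Cases~1 and~2 of Lemmas~\ref{lem:8}/\ref{lem:10}, and Lemma~\ref{lem:6} is the structural fact that the $(k{-}1)$-dimensional self-orthogonal subcode of $C^*$ can always be taken to contain a prescribed self-orthogonal subset). Two small points to tighten: first, the flag ratio you write is not literally the uniform probability over $\mC_k^*$, because different $C^*$ carry different numbers of self-orthogonal hyperplanes, but the discrepancy is at most a factor $(q^k-1)/(q-1)$ and is absorbed into the $+O(k)$ slack already in the exponent; second, the parity bookkeeping you flag as the main obstacle is lighter than you fear---the paper never tracks the residual rank step by step, it simply applies the uniform two-sided estimate of Lemma~\ref{lem:2} at each of the $O(k)$ extension steps (exact for even $q$, where $\sum_\ell x_\ell^2$ has rank~$1$, and costing only a bounded factor per step for odd $q$).
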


We leave the proof of Lemma \ref{lem:4} to the coming Subsection F.

\subsection{Proof of Theorem \ref{thm:5}}

\begin{proof} Pick $M_{\Gd}=5M$, where $M$ is the constant in Lemma \ref{lem:3}.
Put $L=\lceil M_\delta/\epsilon\rceil$. Finally assume that $n$ is
large enough to satisfy (i) $n\ge L$; (ii) the term $o(1)$ in Lemma
\ref{lem:3} is at most $1$; (iii) $n\ge \frac
1{3(1-R)}(\log_q(2L)+L^2-L+3)$, i.e.,
\begin{equation}\label{eq:0a}
q^{-3(1-R)n}\times 2Lq^{L^2-L+3}\le 1.
\end{equation}

Let $C$ be a random self-orthogonal code with dimension $Rn$ in
$\F_q^n$. To show that $C$ is
$(\delta,\dfrac{M_\delta}{\varepsilon})$-list-decodable with high
probability, it is sufficient to show that with low probability that
$C$ is not  $(\delta,\dfrac{M_\delta}{\varepsilon})$-list-decodable,
i.e.,

\begin{equation}\label{eq:3}\Pr_{C\in\mC_{Rn}}[\exists \bx\in \F_q^n\ \mbox{such that}\ |B_n(\bx,\delta)\cap C|\geq L]<q^{- n},\end{equation}
where $\mC_k$ denotes the set of $q$-ary $[n,k]$-self-orthogonal
codes.

Thus, from now on we only need to prove that
\begin{equation}\label{eq:4}\Pr_{C\in\mC_{Rn},\bx\in\F_q^n}[|B_n(\bx,\delta)\cap C|\geq L]<q^{-n}\cdot q^{-(1-R)n}.\end{equation}
Note that the inequality (\ref{eq:4}) is derived from  (\ref{eq:3})
since, for every linear $C$ for which there is a ``bad" $\bx$ such
that $|B_n(\bx,\delta)\cap C|\geq L$, there are $q^{Rn}$ such ``bad"
$\bx$.

 Furthermore, the probability at the left side of (\ref{eq:4}) can be transformed into the following.
\begin{eqnarray*}
&&\Pr_{C\in\mC_{Rn},\bx\in\F_q^n}[|B_n(\bx,\delta)\cap C|\geq L]\\
&&=\Pr_{C\in\mC_{Rn},\bx\in\F_q^n}[|B_n(\bo,\delta)\cap (C+\bx)|\geq L]\\
  &&\leq \Pr_{C\in\mC_{Rn},{\bx}\in\F_q^n}[|B_n({\bo},\delta)\cap {\rm span}\{C, \bx\}|\geq L]\\
  &&\leq  \Pr_{C^*\in\mC_{Rn+1}^*}[|B_n(\bo,\delta)\cap C^*|\geq L],
 \end{eqnarray*}
where $C^*$ is a random $Rn+1$ dimensional subspace containing ${\rm
span}\{C, \bx\}$.

For any integer $t$ with $\log_q L\leq t\leq L$ (and hence $L\leq
q^t$), denote by $\mF_t$  the set of all tuples
$(\bv_1,\dots,\bv_t)\in B_n(\bo,\delta)^t$ such that
$\bv_1,\dots,\bv_t$ are linearly independent and $|{\rm
span}\{\bv_1,\dots, \bv_t\}\cap B_n(\bo,\delta)|\geq L$. Put
$\mF=\cup_{t=\lceil \log_q L\rceil}^L\mF_t$ and denote by $(\bv)$
and $\{\bv\}$  the tuple $(\bv_1,\dots,\bv_t)$ and the set
$\{\bv_1,\dots,\bv_t\}$, respectively.

We claim that if $|B_n(\bo,\delta)\cap C^*|\geq L$, there must
exist $(\bv)\in \mF$ such that $C^*\supseteq \{\bv\}$. Indeed, let
$\{\bu\}$ be a maximal linearly independent subset of
$B_n(\bo,\delta)\cap C^*$. If $|\{\bu\}|<L$, then we can simply take
$\{\bv\}=\{\bu\}$. Otherwise, we can take $\{\bv\}$ to be any subset
of $\{\bu\}$ of size $L$.
 Therefore, we have

\begin{eqnarray}
&&\Pr_{C^*\in\mC_{Rn+1}^*}[|B_n(\bo,\Gd)\cap C^*|\geq L] \\&&\leq \sum_{(\bv)\in \mF}\Pr_{C^*\in\mC_{Rn+1}^*}[C^*\supseteq \{\bv\}]\\
&&=\sum_{t=\lceil \log_q L\rceil} ^L\sum_{(\bv)\in \mF_t}\Pr_{C^*\in\mC_{Rn+1}^*}[C^*\supseteq \{\bv\}]\\
&&\leq \sum_{t=\lceil \log_q L\rceil} ^L|\mF_t|
q^{((Rn+1)+t-n-2)t+4(Rn+1)-1}\quad \mbox{by
(\ref{eq:2})}.\label{eq:7}
\end{eqnarray}

Thus, to have a good bound on our probability, we need to have  a
reasonably good upper bound for $|\mF_t|$. As in \cite{Gru}, we
divide the range of $t$ into two intervals.

(1) If $t< 5/\epsilon$, then
\[\dfrac{|\mF_t|}{|B_n(\bo,\delta)|^t}\leq \Pr[|{\rm span}({X_1,\dots, X_t})\cap B_n(\bo,\delta)|\geq L].\]

 Since $L\geq M\cdot t$, by Lemma \ref{lem:3} we have
 \[|\mF_t|\leq |B_n(\bo,\delta)|^t\cdot q^{-5n} \leq q^{ntH(\delta){-5n}}.\]

(2) If $t\geq 5/\epsilon$, then we  have $|\mF_t|\leq
|B_n(\bo,\delta)|^t\leq q^{ntH(\delta)}$ which is just a trivial
bound.

\hspace{0.5cm}

Finally, by substituting the value of $R=1-H(\delta)-\epsilon$ into
the inequality (\ref{eq:7}), we have
\[\begin{array}{lll}
&&\Pr_{C^*\in\mC_{Rn+1}^*}[|B_n(\bo,\delta)\cap C^*|\geq L]\\
&&\leq   \sum_{t=\lceil \log_q L\rceil}^{\lceil 5/\epsilon\rceil-1}q^{ntH(\delta)-5n}\cdot  q^{(-n+t+Rn-1)t+4Rn+3}\\
&&+\sum_{t=\lceil 5/\epsilon\rceil}^Lq^{ntH(\delta)}\cdot  q^{(-n+t+Rn-1)t+4Rn+3}\\
&&= q^{-5n+4Rn}\sum_{t=\lceil \log_q L\rceil}^{\lceil
5/\epsilon\rceil-1}q^{-\epsilon tn+t^2-t+3}\\ &&+q^{4Rn}\sum_{t=\lceil
5/\epsilon\rceil}^L
q^{-\epsilon tn+t^2-t+3}\\
&&\leq q^{-5n+4Rn}\cdot L\cdot q^{L^2-L+3}+q^{4Rn}\cdot L\cdot
q^{-5n+L^2-L+3}
\\
&&= q^{-n}\cdot q^{-(1-R)n}\times q^{-3(1-R)n}\times 2Lq^{L^2-L+3}\\
&&\le q^{-n}\cdot q^{-(1-R)n} \quad \mbox{by (\ref{eq:0a})}.
\end{array} \]
This completes the proof.
\end{proof}

\subsection{Proof of Lemma \ref{lem:4}}
Let us start with  a lemma that will be used in this subsection. Recall
that $\mC_k$ denotes the set of $q$-ary $[n,k]$ Euclidean self-orthogonal
codes, while $\mC_k^*$ denotes the set of $q$-ary $[n,k]$-linear
codes in which every code contains an Euclidean self-orthogonal subcode of
dimension $k-1$.

\begin{lemma}\label{lem:6} For any given linearly independent, self-orthogonal set $\{\bv_1,\dots,\bv_t\}$  with $t<k<n/2$ in a code $C^*\in\mC_k^*$,  one can find a self-orthogonal subcode $C'$ of $C^*$ with $\dim(C')=k-1$ such that $C'$ contains the set $\{\bv_1,\dots,\bv_t\}$.
\end{lemma}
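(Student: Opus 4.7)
The plan is to adjust the given self-orthogonal subcode of $C^*$ just enough to accommodate $\bv_1,\dots,\bv_t$. Since $C^*\in\mC_k^*$, fix a self-orthogonal subcode $D\subseteq C^*$ with $\dim D=k-1$, and put $V:={\rm span}\{\bv_1,\dots,\bv_t\}$. If $V\subseteq D$ then $C':=D$ already works, so assume $V\not\subseteq D$. Then $V+D$ strictly contains $D$ and sits inside the $k$-dimensional $C^*$, which forces $V+D=C^*$ and $\dim(V\cap D)=t-1$.

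I propose to take $C':=V+W$ for a subspace $W\subseteq D$ of dimension $k-2$ satisfying (i) $W\subseteq V^{\perp_E}$ and (ii) $V\cap D\subseteq W$. Granted such $W$, self-orthogonality of $V+W$ follows from $V\perp V$, $W\perp V$ and $W\perp W$ (the last because $W\subseteq D$ and $D$ is self-orthogonal); the dimension count uses $V\cap W=V\cap D$ (which follows from $W\subseteq D$ and $V\cap D\subseteq W$), yielding $\dim(V+W)=t+(k-2)-(t-1)=k-1$, and $C'\subseteq V+D=C^*$ is automatic.

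To produce such $W$, I would first show that $D\cap V^{\perp_E}$ has codimension at most $1$ in $D$. The bilinear pairing $D\times V\to\F_q$ given by $(\mv{d},\bv)\mapsto\langle\mv{d},\bv\rangle$ has left kernel $D\cap V^{\perp_E}$, and it vanishes on $D\times(V\cap D)$ because $D$ is self-orthogonal and $V\cap D\subseteq D$; hence it factors through $D\times\bigl(V/(V\cap D)\bigr)$, whose second factor is one-dimensional, so the left kernel has codimension at most $1$. If $\dim(D\cap V^{\perp_E})=k-2$, take $W:=D\cap V^{\perp_E}$, which automatically contains $V\cap D$ since $V\cap D\subseteq V\subseteq V^{\perp_E}$. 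If instead $\dim(D\cap V^{\perp_E})=k-1$ (i.e.\ $D\subseteq V^{\perp_E}$), let $W$ be any hyperplane of $D$ containing $V\cap D$, which exists since $\dim D-\dim(V\cap D)=k-t\ge 1$. No serious obstacle is expected; the delicate piece is just the codimension bound for $D\cap V^{\perp_E}$, which collapses once one notices that $D$ pairs trivially against $V\cap D$.
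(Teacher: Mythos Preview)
Your argument is correct and is essentially the paper's own proof: both intersect the given self-orthogonal hyperplane $D$ (the paper's $C$) with the orthogonal complement of a vector $\bv\in V\setminus D$, pick a $(k-2)$-dimensional $W$ therein containing $V\cap D$, and set $C'=W+\langle\bv\rangle$; since $D$ is self-orthogonal one has $D\cap V^{\perp_E}=D\cap\langle\bv\rangle^{\perp_E}$ and $V+W=\langle\bv\rangle+W$, so your construction coincides with theirs. Your bilinear-pairing justification for the codimension bound and your uniform handling of the case $t=k-1$ are slightly cleaner packaging, but not a different idea.
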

\begin{proof} Let $V$ be the space spanned by $\{\bv_1,\dots,\bv_t\}$. Let $C$ be a   self-orthogonal subcode of $C^*$ of dimension $k-1$. If $V$ is a subspace of $C$, then we can simply take $C'=C$. If $t=k-1$, we can simply take $C'=V$. Now we assume that $V$ is not contained in $C$ and $t<k$. In this case, we must have $\dim(C\cap V)=t-1\le k-2$. Choose a vector $\bv$ from $V\setminus C$. Let $U\subseteq \F_q^n$ be the dual code of $\langle\bv\rangle$. Then $U$ has dimension $n-1$ and the intersection $U\cap C$ has dimension at least $k-2$. It is clear that $V\cap C$ is contained in  $U\cap C$ since $V\subseteq U$. Thus, $\bv$ is not contained in $U\cap C$. Furthermore, we can choose a subspace $W$ of $U\cap C$ such that $V\cap C\subseteq W$ and $\dim(W)=k-2$. Let $C'$ be the space spanned by $W$ and $\bv$. It is clear that $W$ is self-orthogonal since $W\subseteq C$. Moreover, $\bv$ is orthogonal to itself and every word in $W$ since $W\subseteq U$. As $\bv$ is not contained in $W$, $C'$ must have dimension $k-1$. This completes the proof.
\end{proof}

\subsection*{Case 1: $\F_q$ has even characteristic}

If $\F_q$ has even characteristic, then we have the following results from counting arguments.

\begin{lemma}\label{lem:7} For $k<n/2$,
the cardinality of $\mC_k^*$ is at least
\[\dfrac{(q^{n-k+1}-1)(q^{n-2k+3}-1)(q^{n-2k+5}-1)\cdots (q^{n-1}-1)}{(q^{k}-1)(q^{k-1}-1)\cdots (q-1)}.\]
\end{lemma}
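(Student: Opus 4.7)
The plan is to prove the inequality by a double-counting argument on pairs $(C,C^*)$, where $C^*\in\mC_k^*$ and $C$ is a self-orthogonal $(k-1)$-dimensional subcode of $C^*$. Let $S_{k-1}$ denote the total number of self-orthogonal $(k-1)$-dimensional subcodes of $\F_q^n$, and $s(C^*)$ the number of such subcodes contained in a given $C^*$. Since a $(k-1)$-dimensional subspace of a $k$-dimensional code is a hyperplane, we have $s(C^*)\le(q^k-1)/(q-1)$; and since each self-orthogonal $(k-1)$-dimensional subcode $C$ extends to exactly $(q^{n-k+1}-1)/(q-1)$ codes of dimension $k$ (corresponding to the $1$-dimensional subspaces of $\F_q^n/C$), double counting yields
\[
|\mC_k^*|\cdot\frac{q^k-1}{q-1}\;\ge\;\sum_{C^*\in\mC_k^*}s(C^*)\;=\;S_{k-1}\cdot\frac{q^{n-k+1}-1}{q-1}.
\]
The problem therefore reduces to lower-bounding $S_{k-1}$.

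To do so, I plan to exploit the characteristic-$2$ identity $\langle\bv,\bv\rangle=(v_1+\cdots+v_n)^2$. This shows that $\bv$ is self-orthogonal iff it lies in the hyperplane $H=\{\bv:v_1+\cdots+v_n=0\}$, and (since $\bv\mapsto\langle\bv,\bv\rangle$ is the square of a linear form) that a subspace is self-orthogonal iff it is contained in $H$ and possesses a pairwise-orthogonal basis. I will then count ordered linearly independent self-orthogonal $(k-1)$-tuples $(\bv_1,\dots,\bv_{k-1})$ inductively: having picked $\bv_1,\dots,\bv_{j-1}$, the vector $\bv_j$ must satisfy the $j$ linear conditions $\langle\bv_j,\bi\rangle=\langle\bv_j,\bv_1\rangle=\cdots=\langle\bv_j,\bv_{j-1}\rangle=0$, whose solution set has dimension at least $n-j$, and must avoid the $q^{j-1}$ vectors of $\mathrm{span}(\bv_1,\dots,\bv_{j-1})$ (each of which already satisfies those conditions by the inductive hypothesis). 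This yields at least $q^{n-j}-q^{j-1}=q^{j-1}(q^{n-2j+1}-1)$ choices. Multiplying over $j=1,\dots,k-1$ and dividing by the number of ordered bases of a $(k-1)$-dimensional space produces
\[
S_{k-1}\;\ge\;\frac{(q^{n-1}-1)(q^{n-3}-1)\cdots(q^{n-2k+3}-1)}{(q^{k-1}-1)(q^{k-2}-1)\cdots(q-1)},
\]
and substituting this into the previous inequality yields the claim after a routine cancellation.

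The only delicate point in the argument is that the $j$ linear functionals $\bi,\bv_1,\dots,\bv_{j-1}$ are not guaranteed to be linearly independent at every step (for instance, in characteristic $2$ with $n$ even, $\bi$ itself lies in $H$ and can coincide with a combination of the earlier $\bv_i$'s). However, any such degeneracy can only enlarge the solution space, so the estimate $q^{n-j}-q^{j-1}$ remains valid uniformly and no case analysis is required. I therefore anticipate no serious technical obstacles, and view the main substantive ingredient as the characteristic-$2$ linearization of self-orthogonality, which reduces the problem to a symplectic-type count.
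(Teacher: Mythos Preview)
Your argument is correct and the core counting step---extending a self-orthogonal tuple one vector at a time, using the characteristic-$2$ fact that $\langle\bv,\bv\rangle=(\sum v_i)^2$ so that at stage $j$ there are at least $q^{n-j}-q^{j-1}$ admissible choices---is exactly the computation the paper performs. The packaging, however, is slightly different: the paper splits $\mC_k^*$ into the self-orthogonal codes $\mC_k$ (contributing $B_k^{(1)}$) and the non-self-orthogonal ones (contributing $B_k^{(2)}$), bounds each separately by building up to dimension $k$ or $k-1$ respectively, and then adds. Your double-counting of pairs $(C,C^*)$ with $C\in\mC_{k-1}$ a hyperplane of $C^*$ replaces that case split by a single uniform step, which is a mild streamlining---you need only the lower bound on $S_{k-1}=|\mC_{k-1}|$ (the paper's $B_{k-1}^{(1)}$), and the factor $(q^{n-k+1}-1)/(q^k-1)$ from the double count reproduces precisely what the paper obtains by summing $B_k^{(1)}+B_k^{(2)}$. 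Neither approach gains anything the other lacks; yours is just a bit tidier in avoiding the distinction between self-orthogonal and non-self-orthogonal members of $\mC_k^*$.
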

\begin{proof}
Denote by $B_k^{(1)}$ and $B_k^{(2)}$ the cardinalities of $\mC_k$
and  $\mC^*_k\setminus \mC_k$, respectively. Then
$|\mC^*_k|=B_k^{(1)}+B_k^{(2)}.$

Let us consider  $B_k^{(1)}$ first. First of all, a vector
$\bx=(x_1,\dots,x_n)$ is self-orthogonal if and only if
$x_1^2+\cdots+x_n^2=0$. This quadratic form is equivalent to
$x_1^2=0$ and hence by Lemma \ref{lem:2} it has $q^{n-1}$ solutions.

 For a code $C_{i-1}$ in $\mC_{i-1}$, we can span  $C_{i-1}$ into a self-orthogonal code $C_i$ in $\mC_i$ by adding one self-orthogonal vector in $C_{i-1}^{\perp_E}\diagdown C_{i-1}$.    Hence, we have $q^{n-i}-q^{i-1}$ choices of such a vector. On the other hand, there are $(q^k-q^{k-1})(q^k-q^{k-2})\cdots (q^k-1)$ choices of $k$-dimensional  basis generating the same code of dimension $k$. Therefore,
\[B_k^{(1)}\ge \dfrac{(q^{n-2k+1}-1)(q^{n-2k+3}-1)\cdots (q^{n-1}-1)}{(q^k-1)(q^{k-1}-1)\cdots (q-1)}.\]

The computation of $B_k^{(2)}$ is a bit different  from that of
$B_k^{(1)}$ as codes in   $\mC^*_k\setminus \mC_k$ are not
self-orthogonal. We first choose a linearly independent,
self-orthogonal set of size $k-1$. One can then span this set into a
code in $\mC^*_k\setminus \mC_k$ by adding a  vector in
$\F_q^n\diagdown C_{k-1}^{\perp_E}$. Thus, we obtain a recursive
formula and get the following inequality

\[B_k^{(2)}\geq \dfrac{(q^{n-k+1}-q^{n-2k+1})\prod_{i=1}^{k-1}(q^{n-2i+1}-1)}{(q^{k}-1)(q^{k-1}-1)\cdots (q-1)}. \]

The desired result follows from adding $B_k^{(1)}$ with $B_k^{(2)}$.
\end{proof}

\begin{lemma}\label{lem:8}
For given $t$ ($t<k<n/2$) linearly independent vectors $\bv_1,\dots
,\bv_t$ in $\F_q^n$, the number of linear codes $C^*\in\mC_k^*$ such
that $C^*\supseteq \{\bv_1,\dots,\bv_t\}$ is at most
\[\dfrac{(q^{n-2k+3}-1)(q^{n-2k+5}-1)\cdots (q^{n-2t-1}-1)(q^n-q^{k-1})}{(q^{k-t-1}-1)(q^{k-t-2}-1)\cdots (q-1)}
.\]

\end{lemma}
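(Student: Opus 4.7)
The plan is to combine Lemma \ref{lem:6} with a counting argument in the spirit of Lemma \ref{lem:7}. Assuming first that $\{\bv_1,\dots,\bv_t\}$ is self-orthogonal, I would invoke Lemma \ref{lem:6}: every $C^*\in\mC_k^*$ containing $\{\bv_1,\dots,\bv_t\}$ admits a self-orthogonal subspace $C'\subseteq C^*$ of dimension $k-1$ with $\{\bv_1,\dots,\bv_t\}\subseteq C'$, so that $C^*=C'+\langle\bu\rangle$ for some $\bu\in\F_q^n\setminus C'$. Thus the enumeration reduces to counting ordered tuples $(\bv_{t+1},\dots,\bv_{k-1},\bu)$ where the first $k-1-t$ entries self-orthogonally extend $\{\bv_1,\dots,\bv_t\}$ and $\bu$ is any vector outside their span; each such tuple produces a code $C^*\in\mC_k^*$ containing $\{\bv_1,\dots,\bv_t\}$, and I divide by the number of tuples representing the same code.

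Next, I would exploit that in even characteristic the self-orthogonality condition $\bx\cdot\bx=0$ collapses to the linear condition $\sum_{j=1}^n x_j=0$, since $\sum_j x_j^2=(\sum_j x_j)^2$ in characteristic $2$. Extending a self-orthogonal set of dimension $i$ by a new self-orthogonal vector orthogonal to the current span then amounts to solving $i+1$ linear equations: at most $q^{n-i-1}$ solutions in $\F_q^n$, hence at most $q^{n-i-1}-q^i$ that are not already in the span. Multiplying over $i=t,\dots,k-2$ and incorporating the $q^n-q^{k-1}$ choices of $\bu$, the total number of ordered tuples is at most $\prod_{i=t}^{k-2}(q^{n-i-1}-q^i)\cdot(q^n-q^{k-1})$. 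Dividing by the overcount per code $\prod_{i=t}^{k-2}(q^{k-1}-q^i)$ (the number of ordered basis extensions of $\mathrm{span}\{\bv_1,\dots,\bv_t\}$ within a fixed $C'$), and factoring $q^i(q^{n-2i-1}-1)$ in the numerator against $q^i(q^{k-1-i}-1)$ in the denominator, the $q^i$ factors cancel and yield exactly the formula stated in Lemma \ref{lem:8}.

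The main obstacle is the subtle role of the all-ones vector $\mathbf{1}$: if $\mathbf{1}$ already lies in the current self-orthogonal span (possible in characteristic $2$ when $n$ is even), then the self-orthogonality equation is implied by the orthogonality equations and the number of valid choices at that step rises to $q^{n-i}-q^i$, so my step-by-step bound fails naively. I would verify that the accumulated looseness from not dividing by the $(q^k-q^{k-1})$-fold overcount of $\bu$ per $C^*$ (i.e., counting pairs $(C',\bu)$ rather than codes) absorbs this discrepancy uniformly, keeping the stated bound valid. Finally, the case where $\{\bv_1,\dots,\bv_t\}$ is not self-orthogonal forces some $\bv_i\in C^*\setminus C'$, imposing a coherence condition modulo $C'$ on the non-self-orthogonal $\bv_i$'s that yields a strictly smaller count, so the same bound still applies.
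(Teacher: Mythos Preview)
Your approach is essentially the paper's: invoke Lemma~\ref{lem:6} to write $C^*=C'+\langle\bu\rangle$ with $C'$ self-orthogonal of dimension $k-1$, then count ordered self-orthogonal extensions of $\{\bv_1,\dots,\bv_t\}$ to $C'$ and divide by the overcount. Your computation in the self-orthogonal case lands exactly on the stated formula.

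Where your plan is thinner than the paper is the non-self-orthogonal case. The paper does not argue via a vague ``coherence condition modulo $C'$''; instead it lets $D$ be a \emph{maximal} self-orthogonal subspace of ${\rm span}\{\bv_1,\dots,\bv_t\}$ and splits into three cases on $\dim(D)$. If $\dim(D)=t-1$, one changes basis so that $\bu_1,\dots,\bu_{t-1}$ span $D$, applies Lemma~\ref{lem:6} to $\{\bu_1,\dots,\bu_{t-1}\}$ to build $C'$, and then necessarily $C^*={\rm span}\{C',\bu_t\}$; this gives a count that is at most the Case~1 bound. If $\dim(D)\le t-2$, then since any self-orthogonal $C'\subset C^*$ of codimension one would force $\dim\big({\rm span}\{\bv_i\}\cap C'\big)\ge t-1$ with that intersection self-orthogonal, contradicting maximality of $D$, one gets $A_k=0$. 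Your sketch does not separate these two sub-cases and in particular does not identify the impossibility in the $\dim(D)\le t-2$ situation, so it should be sharpened along these lines.

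Your observation about the all-ones vector $\mathbf{1}$ is apt: when $\mathbf{1}$ already lies in the current self-orthogonal span, the step count is $q^{n-i}-q^i$ rather than $q^{n-i-1}-q^i$, so the stated per-step bound can fail. The paper does not address this point explicitly and its formulas appear to use the smaller count throughout. Your proposed remedy (absorbing the discrepancy into the undeclared $(q^k-q^{k-1})$-fold overcount of $\bu$) works only when the number of ``bad'' steps is small; in the worst case $\mathbf{1}\in{\rm span}\{\bv_1,\dots,\bv_t\}$ and every one of the $k-1-t$ steps picks up an extra factor of roughly $q$, which a single $q^{k-1}(q-1)$ does not in general cover. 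The clean fix is simply to use $q^{n-i}-q^i$ at every step; this yields the stated bound multiplied by at most $q^{k-1-t}$, which is harmless for Lemma~\ref{lem:4} and for the Main Theorem since only the crude exponent $(k+t-n-1)t+O(k)$ is ever used downstream.
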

\begin{proof} Denote by $A_k$ the number  of linear codes $C^*\in\mC_k^*$ such that $C^*\supseteq \{\bv_1,\dots,\bv_t\}$.
 Denote by $D$ a maximal self-orthogonal code in ${\rm span}\{\bv_1, \dots,\bv_t\}$. Let  $A_k^{(1)}$ denote  the number of self-orthogonal codes $C^*$ in  $\mC_k^*$  such that $\{\bv_1,\dots,\bv_t\}\subseteq C^*$;  and let $A_k^{(2)}$ denote the number of $C^*\in\mC_k^*$ such that $C^*$ is not self-orthogonal  and $\{\bv_1,\dots,\bv_t\}\subseteq C^*$.

{\it Case 1}: If ${\dim} (D)=t$, then $\{\bv_1, \dots,\bv_t\}$ is a
self-orthogonal set. By Lemma \ref{lem:6}, we can span $\{\bv_1,
\dots,\bv_t\}$ into a code in $\mC^*_k$.

The counting idea is similar to that in the proof of Lemma
\ref{lem:7} except for that we  first fix $t$ linearly independent
vectors $\{\bv_1, \dots,\bv_t\}$ and then span them into a larger
code. Thus, we have
\[A_k^{(1)}\leq\dfrac{\prod_{i=t-1}^{k-1}(q^{n-2i+1}-1)(q^{n-k}-q^{k-1})}{(q^{k-t}-1)(q^{k-t-1}-1)\cdots (q^2-1)(q^{k}-q^{k-1})}.
\]
Similarly, we have
 \[A_k^{(2)}\leq \dfrac{\prod_{i=t-1}^{k-1}(q^{n-2i+1}-1)(q^{n}-q^{n-k})}{(q^{k-t-1}-1)(q^{k-t-2}-1)\cdots (q-1)}
.\] The desired result follows from adding $A_k^{(1)}$ with
$A_k^{(2)}$.

{\it Case 2}: If ${\dim} (D)=t-1$, then we choose a suitable basis
$\bu_1,\dots,\bu_t$ for ${\rm span}\{\bv_1, \dots,\bv_t\}$ such that
$\bu_1,\dots,\bu_{t-1}\in D$. In this case, by Lemma \ref{lem:6} we
can get a code $C'$ of dimension $k-1$ that contains $D$, and then a
code $C^*:={\rm span}\{C',\bu_t\}$. Hence,
\[A_k\leq \dfrac{(q^{n-2k+3}-1)(q^{n-2k+5}-1)\cdots (q^{n-2t+1}-1)}{(q^{k-t}-1)(q^{k-t-1}-1)\cdots (q-1)}
.\]

{\it Case 3}: If ${\dim} (D)\leq t-2$, then in this case it is
impossible  to find a  code in $\mC^*_k$ containing  $\{\bv_1,
\dots,\bv_t\}$. In other words, $A_k=0$.

This completes the proof.
\end{proof}

\subsection*{Case 2: $\F_q$ has odd characteristic }
The counting technique for odd $q$ is analogous with that of even
$q$. The only difference here is the number of self-orthogonal
vectors.

Note that a vector $\bx=(x_1,\dots,x_n)\in\F_q^n$ is self-orthogonal
if and only if
\begin{equation}\label{eq:01}
x_1^2+\cdots+x_n^2=0.
\end{equation}
In the case where $q$ is even, the quadratic form (\ref{eq:01}) has
rank $1$. Hence, by Lemma \ref{lem:2} it has $q^{n-1}$ solutions.
However, in the case where $q$ is odd, the quadratic form
(\ref{eq:01}) has rank $n$ and hence by Lemma \ref{lem:2} the number
of its solutions is between $q^{n-1}-(q-1)q^{\frac n2-1}$ and
$q^{n-1}+(q-1)q^{\frac n2-1}$. Therefore, the corresponding results
of Lemmas \ref{lem:7} and \ref{lem:8} are slightly different in the
case of odd characteristic. We state the results below without
proofs.

\begin{lemma}\label{lem:9} For $k<n/2$,
 the cardinality of $\mC_k^*$ is at least
\[\dfrac{(q^{n-k+1}-1)(q^{n-2k+2}-1)(q^{n-2k+4}-1)\cdots (q^{n-2}-1)}{(q^{k}-1)(q^{k-1}-1)\cdots (q-1)}.\]
\end{lemma}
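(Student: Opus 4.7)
The proof mirrors Lemma \ref{lem:7}, with the only essential change being in the count of self-orthogonal vectors in odd characteristic. In characteristic $2$ the form $Q(\bx)=x_1^2+\cdots+x_n^2$ has rank $1$, so $Q(\bx)=0$ defines a hyperplane; in odd characteristic $Q$ is nondegenerate of rank $n$, and Lemma \ref{lem:2} yields a slightly smaller zero-set size. The plan is to set up the same decomposition $|\mC_k^*|=B_k^{(1)}+B_k^{(2)}$ with $B_k^{(1)}=|\mC_k|$ and $B_k^{(2)}=|\mC_k^*\setminus\mC_k|$, and bound each summand by an inductive construction, absorbing the Lemma \ref{lem:2} correction into a single power-of-$q$ loss per step.

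The central combinatorial input is: for any $C_{i-1}\in\mC_{i-1}$ with $i\le k<n/2$, the number of self-orthogonal vectors in $C_{i-1}^{\perp_E}\setminus C_{i-1}$ is at least $q^{i-1}(q^{n-2i}-1)$. To see this, observe that $Q$ restricted to $C_{i-1}^{\perp_E}$ has radical exactly $C_{i-1}$ (since $Q$ vanishes on the self-orthogonal $C_{i-1}$ and $(C_{i-1}^{\perp_E})^{\perp_E}=C_{i-1}$), and hence descends to a nondegenerate quadratic form on the $(n-2i+2)$-dimensional quotient $C_{i-1}^{\perp_E}/C_{i-1}$. By Lemma \ref{lem:2}, the number of zeros of this quotient form is at least $q^{n-2i+1}-(q-1)q^{n/2-i}$; lifting and removing the $q^{i-1}$ vectors of $C_{i-1}$ gives a count of at least $q^{n-i}-(q-1)q^{n/2-1}-q^{i-1}$. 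The dominance $q^{n-i}-(q-1)q^{n/2-1}\ge q^{n-i-1}$ is equivalent to $q^{n-i-1}\ge q^{n/2-1}$, i.e. $i\le n/2$, so the desired bound $q^{n-i-1}-q^{i-1}=q^{i-1}(q^{n-2i}-1)$ holds throughout.

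With this bound in place, the arguments for $B_k^{(1)}$ and $B_k^{(2)}$ proceed exactly as in Lemma \ref{lem:7}. For $B_k^{(1)}$, iteratively adjoin self-orthogonal vectors for $i=1,\ldots,k$, take the product, and divide by $|GL_k(\F_q)|=\prod_{j=0}^{k-1}(q^k-q^j)$; the $q^{i-1}$ factors cancel. For $B_k^{(2)}$, iteratively build a self-orthogonal $C_{k-1}$ in $k-1$ steps and then adjoin an arbitrary vector in $\F_q^n\setminus C_{k-1}^{\perp_E}$, which contributes $q^n-q^{n-k+1}$ choices and automatically produces a code in $\mC_k^*\setminus\mC_k$. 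Normalizing by the appropriate number of bases, this yields
\[
B_k^{(1)}\ge\frac{(q^{n-2k}-1)\prod_{i=1}^{k-1}(q^{n-2i}-1)}{\prod_{j=1}^k(q^j-1)},\qquad B_k^{(2)}\ge\frac{(q^{n-k+1}-q^{n-2k})\prod_{i=1}^{k-1}(q^{n-2i}-1)}{\prod_{j=1}^k(q^j-1)}.
\]
Summing and using the telescoping identity $(q^{n-2k}-1)+(q^{n-k+1}-q^{n-2k})=q^{n-k+1}-1$ produces the claimed lower bound.

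The main obstacle is the clean bookkeeping in the vector-counting step: the correction term $(q-1)q^{n/2-i}$ from Lemma \ref{lem:2} is significant near $i=n/2$ and must be absorbed by sacrificing one factor of $q$ per step. This loss is exactly why the odd-characteristic formula has exponents uniformly one smaller than Lemma \ref{lem:7}'s, and why the hypothesis $k<n/2$ is essential for the dominance argument to go through.
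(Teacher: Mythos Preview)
Your proposal is correct and follows exactly the approach the paper indicates: carry over the argument of Lemma~\ref{lem:7}, replacing the even-characteristic count of self-orthogonal vectors by the odd-characteristic one obtained from Lemma~\ref{lem:2}. The step you make explicit---that $Q$ restricted to $C_{i-1}^{\perp_E}$ has radical $C_{i-1}$ and hence rank $n-2i+2$, so that there are at least $q^{n-i-1}-q^{i-1}$ self-orthogonal vectors in $C_{i-1}^{\perp_E}\setminus C_{i-1}$---is precisely the ``only difference'' the paper alludes to, and the remaining bookkeeping and telescoping mirror Lemma~\ref{lem:7} verbatim.
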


\begin{lemma}\label{lem:10}
For given $t$ ($t<k<n/2$) linearly independent vectors $\bv_1,\dots
,\bv_t$ over $\F_q^n$, the number of linear codes $C^*\in\mC_k^*$
such that $C^*\supseteq \{\bv_1,\dots,\bv_t\}$ is at most
\[\dfrac{\prod_{i=t-1}^{k-1}(2q^{n-2i+1}-1)(q^n+q^{n-2k+1})}{(q^{k-t-1}-1)(q^{k-t-2}-1)\cdots (q-1)}
.\]
\end{lemma}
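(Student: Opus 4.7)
The plan is to follow the three-case analysis used in the proof of Lemma~\ref{lem:8} essentially verbatim, modifying only those counts that depend on the number of zeros of the quadratic form $x_1^2+\cdots+x_n^2$. The sole structural change in odd characteristic is that this form is non-degenerate of rank $n$, so by Lemma~\ref{lem:2} its zero set has size $q^{n-1}\pm(q-1)q^{\frac{n}{2}-1}$ rather than the exact count $q^{n-1}$ used in even characteristic. At each step of the counting I will upper bound such a fluctuating term by a bound of the form $2q^{\,\cdot}-1$, which is the source of the factor of $2$ in the numerator product $\prod_{i=t-1}^{k-1}(2q^{n-2i+1}-1)$.

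Let $D$ be a maximal self-orthogonal subspace of ${\rm span}\{\bv_1,\dots,\bv_t\}$, and split into three cases on $\dim D$. In Case~3, where $\dim D\le t-2$, Lemma~\ref{lem:6} forces the count to be zero. In Case~1, where $\dim D=t$ (so $\{\bv_1,\dots,\bv_t\}$ is itself self-orthogonal), every admissible $C^*$ arises, by Lemma~\ref{lem:6}, either as a self-orthogonal extension of dimension $k$ containing these vectors, or as ${\rm span}\{C',\bw\}$, where $C'$ is a $(k-1)$-dimensional self-orthogonal extension and $\bw\in\F_q^n\setminus (C')^{\perp_E}$. Counting $C'$ by adjoining one self-orthogonal vector from $C_i^{\perp_E}\setminus C_i$ at each step, exactly as in the proof of Lemma~\ref{lem:7}, and invoking the odd-$q$ upper bound $2q^{n-2i+1}-1$ in place of the even-characteristic count $q^{n-2i+1}-1$, yields the product $\prod_{i=t-1}^{k-1}(2q^{n-2i+1}-1)$. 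The final appended vector $\bw$ is arbitrary in $\F_q^n\setminus (C')^{\perp_E}$, contributing the factor $q^n-|(C')^{\perp_E}|\le q^n+q^{n-2k+1}$, where the inequality uses the lower-extremal fluctuation of Lemma~\ref{lem:2} applied to the number of self-orthogonal vectors inside $C'$. Dividing by the usual basis-overcount $(q^{k-t-1}-1)(q^{k-t-2}-1)\cdots(q-1)$, identical to the even case, produces the claimed bound. Case~2, $\dim D=t-1$, is handled exactly as the corresponding case of Lemma~\ref{lem:8}: pick a basis $\bu_1,\dots,\bu_t$ of ${\rm span}\{\bv_1,\dots,\bv_t\}$ with $\bu_1,\dots,\bu_{t-1}\in D$, extend $D$ via Lemma~\ref{lem:6} to a $(k-1)$-dimensional self-orthogonal code, and adjoin $\bu_t$; the resulting count is dominated by the Case~1 bound.

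The main obstacle is purely arithmetic bookkeeping: one must verify that the per-step replacements $q^{n-2i+1}-1\mapsto 2q^{n-2i+1}-1$, together with the single replacement $q^n-q^{n-k+1}\mapsto q^n+q^{n-2k+1}$ coming from the lower-fluctuation estimate of $|(C')^{\perp_E}|$, combine cleanly with the unchanged basis-overcount to reproduce precisely the stated upper bound (up to the slack absorbed in the $-1$ terms, which is harmless for the large-$n$ regime used in Theorem~\ref{thm:5}). No conceptual ingredient beyond Lemma~\ref{lem:2}, Lemma~\ref{lem:6}, and the even-characteristic counting of Lemmas~\ref{lem:7}--\ref{lem:8} is required.
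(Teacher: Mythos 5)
The paper states Lemma~\ref{lem:10} explicitly \emph{without proof}, remarking only that ``the counting technique for odd $q$ is analogous with that of even $q$'' and that the sole change is the count of self-orthogonal vectors coming from Lemma~\ref{lem:2}. Your proposal reconstructs precisely that plan: the same three-case split on $\dim D$ as in Lemma~\ref{lem:8}, with each even-characteristic factor $q^{n-2i+1}-1$ replaced by the odd-characteristic upper bound $2q^{n-2i+1}-1$, justified by $q^{m-1}+(q-1)q^{m/2-1}\le 2q^{m-1}$ from Lemma~\ref{lem:2}. This is the argument the authors have in mind, and the high-level framework matches.

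One step of your write-up is genuinely confused, though. You claim the factor $q^n+q^{n-2k+1}$ arises from $q^n-|(C')^{\perp_E}|\le q^n+q^{n-2k+1}$ and attribute this inequality to ``the lower-extremal fluctuation of Lemma~\ref{lem:2} applied to the number of self-orthogonal vectors inside $C'$.'' But $(C')^{\perp_E}$ is a linear subspace of $\F_q^n$ of dimension exactly $n-k+1$, so $|(C')^{\perp_E}|=q^{n-k+1}$ with no fluctuation at all; Lemma~\ref{lem:2} counts zeros of quadratic forms, not cardinalities of dual codes, and simply does not apply to that quantity. The inequality $q^n-q^{n-k+1}\le q^n+q^{n-2k+1}$ is trivially true, so no harm is done numerically, but the stated reason is wrong. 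In the even-characteristic Lemma~\ref{lem:8} the analogous factor $q^n-q^{k-1}$ is obtained by combining the self-orthogonal count $A_k^{(1)}$ with the non-self-orthogonal count $A_k^{(2)}$; in the odd case the extra $q^{n-2k+1}$ should be traced to the excess in the self-orthogonal-vector count entering $A_k^{(1)}$ (where Lemma~\ref{lem:2} actually is relevant), not to any fluctuation in the deterministic size of $(C')^{\perp_E}$. With that misattribution corrected, the rest of the proposal is a faithful reconstruction of the proof the paper omits.
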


{\it\bf Proof of Lemma \ref{lem:4}}: For even $q$, by Lemmas
\ref{lem:7} and  \ref{lem:8}, we have
 \begin{eqnarray*}&&\Pr_{C^*\in\mC^*_k}[\{\bv_1,\dots,\bv_t\}\subseteq C^*]\\
 &&=\frac{|\{C^*\in\mC_k:\; C^*\supseteq \{\bv_1,\dots,\bv_t\}\}|}{|\mC_k^*|}\\
 &&\leq q^{2k-t-1}\left(\dfrac{q^{k-t+1}}{q^{n-2t+1}}\right)^{t}\leq q^{(k-n+t-1)t+2k-1}. \end{eqnarray*}

For odd $q$, by Lemmas \ref{lem:9} and  \ref{lem:10}, we have
 \begin{eqnarray*}
&&\Pr_{C^*\in\mC^*_k}[\{\bv_1,\dots,\bv_t\}\subseteq C^*]\\ &&=\frac{|\{C^*\in\mC_k:\; C^*\supseteq \{\bv_1,\dots,\bv_t\}\}|}{|\mC_k^*|}\\
&&\leq \left(\dfrac{2q^{n-2k+3}-1}{q^{n-2k+2}-1}\right)^{k-t-1}\cdot \left(\dfrac{q^{k-t}-1}{q^{n-2t}-1}\right)^{t} \\
&&\cdot \dfrac{(q^{k}-1)(q^n+q^{n-2k+1})}{q^{n-k+1}-1} \\
&&\leq (3q)^{k-t-1}\left(\dfrac{q^{k-t}}{q^{n-2t}}\right)^t q^{2k}\\
&&\leq  q^{(-n+t+k-2)t+4k-2}.
\end{eqnarray*}
This completes the proof.
\epf

\section{List-decoding of symplectic self-orthogonal codes}
\subsection{Symplectic self-orthogonal codes}
To define symplectic inner product, we have to consider a $q$-ary $[2n,k]$-linear
code $C$ in $\F_q^{2n}$. Two vectors $(\bu_1|\bv_1)$ and $(\bu_2|\bv_2)$ are said symplectic  orthogonal if $\langle
\bu_1,\bv_2\rangle-\langle
\bu_2,\bv_1\rangle=0$. Note that every vector $(\bu|\bv)$ is symplectic
self-orthogonal. The dual code
$C^{\perp_S}$ of a linear code $C$ consists of all vectors in $\F_q^{2n}$
that are orthogonal to every  codeword in $C$. A subset
$\{(\bu_1|\bv_1),\dots,(\bu_t|\bv_t)\}$ of $\F_q^{2n}$ is called symplectic self-orthogonal if
the symplectic inner product of $(\bu_i|\bv_i)$ and $(\bu_j|\bv_j)$ are $0$ for all $1\le i,j\le t$.

A linear code $C$ is said symplectic self-orthogonal if $C\subseteq C^{\perp_S}$. It is well known that a $q$-ary $[2n,k]$-symplectic self-orthogonal code gives a $q$-ary $[[n,n-k]]$-quantum code \cite{CRSS98}. Thus, we define the rate of $C$ in terms of the associate quantum code, i.e., $R:=(n-k)/n$.

Finally, let us  define symplectic weight and distance. For a vector $(\bu|\bv)=(u_1,\dots,u_n|v_1,\dots,v_n)\in\F_q^{2n}$, the symplectic weight is defined to be $\wt_S(\bu|\bv)=|\{1\le i\le n:\; (u_i,v_i)\not=(0,0)\}|.$ The symplectic distance of two vectors  $(\bu_1|\bv_2)$ and $(\bu_2|\bv_2)$ is defined to be $\wt_S(\bu_1-\bu_2|\bv_1-\bv_2)$.

\subsection{Construction of symplectic self-orthogonal codes}

Compared with construction of  random Euclidean self-orthogonal codes, construction of random  symplectic self-orthogonal codes is much easier. This is because every vector in $\F_q^{2n}$ is self-orthogonal under the symplectic inner product. Again construction of a  random  symplectic self-orthogonal code is
equivalent to finding a linearly independent set
$\{(\bu_1|\bv_1),\dots,(\bu_t|\bv_t)\}$  of random  symplectic self-orthogonal vectors. We first choose a nonzero random vector
$(\bu_1|\bv_1)=(u_{11},\dots,u_{1n}|v_{11},\dots,v_{1n})$. Assume that we have already found a linearly
independent set $\{(\bu_1|\bv_1),\dots,(\bu_{k-1}|\bv_{k-1})\}$ of random symplectic
self-orthogonal vectors. If we want to find a $k$th vector
$(\bu_k|\bv_k)=(u_{k1},\dots,u_{kn}|v_{k1},\dots,v_{kn})$, then $(u_{k1},\dots,u_{kn},v_{k1},\dots,v_{kn})$ is a
solution  of the following  equation system
\begin{equation}\label{eq:12}\left\{
\begin{array}{l}
v_{11}x_1+\dots+v_{1n}x_n-(u_{11}y_1+\dots+u_{1n}y_n)=0,\\
\vdots\\
v_{k-1,1}x_1+\dots+v_{k-1,n}x_n-\\
(u_{k-1,1}y_1+\dots+u_{k-1,n}y_n)=0.\\
\end{array}
\right.\end{equation}

\subsection{Connection between  decoding of quantum stabilizer codes and decoding of symplectic self-orthogonal codes}
To simplify our presentation in this subsection, we consider only binary quantum stabilizer codes. Let us briefly describe the background on quantum stabilizer codes and their decoding. The reader may refer to \cite{CRSS98, KL,NC} for the details on decoding of quantum stabilizer codes.

The state space of one qubit is actually a $2$-dimensional complex space with a basis $\{|0\rangle, |1\rangle\}$. We can simply denote this state space of one qubit by $\CC^2$.  Let $\mG_1=\{\pm I, \pm iI, \pm X, \pm iX, \pm Y, \pm i Y, \pm Z, \pm i Z\} $ be the Pauli group acting on $\CC^2$, where $i$ is the imaginary unit, $I$ is the $2\times 2$ identity matrix and
\[X=\left(\begin{array}{cc}0&1\\ 1& 0\end{array}\right),\quad Z=\left(\begin{array}{cc}1&0\\ 0& -1\end{array}\right),\quad Y=iXZ.\]

The tensor product $(\CC^2)^{\otimes n}$  is called the state space of $n$ qubits.  Let $\mG_n$ denote the Pauli group acting on $(\CC^2)^{\otimes n}$ , i.e.,
\[\mG_n=\{i^m\Gs_1\otimes\Gs_2\otimes\cdots\otimes\Gs_n:\; m\in\{0,1,2,3\}, \ \Gs_j\in\{I,X,Y,Z\}\},\]
where the action of an element of $\mG_n$  on a state of $n$ qubits is through the componentwise action of $\Gs_i$ on $\CC^2$.

Quantum stabilizer codes are defined in the following manner. Let $\mS$ be a subgroup of $\mG_n$ such that $-I\otimes I\otimes\cdots\otimes I\not\in \mS$. Then $\mS$ is a $2$-elementary abelian group. Assume that the $2$-rank of $\mS$ is $k$  for some $k\in[0, n]$ and $\mS$ is generated by $\{g_1,g_2,\dots,g_{k}\}$. The subgroup $\mS$ has a fixed subspace $Q_{\mS}$ of $(\CC^2)^{\otimes n}$ defined by
\[Q_{\mS}=\{\bv\in (\CC^2)^{\otimes n}:\; g(\bv)=\bv\ \mbox{for all $g\in \mS$}\}.\]
The subspace $Q_{\mS}$ is called an $[[n,n-k]]$-quantum stabilizer code and it has dimension $2^{n-k}$.

To connect the quantum stabilizer code $Q_{\mS}$ with a classical linear code, we define a group epimorphism $\psi:\; \mG_n\rightarrow \F_2^{2n}$ given by
\[\psi(i^m\Gs_1\otimes\Gs_2\otimes\cdots\otimes\Gs_n)=(x_1,x_2\dots,x_n|z_1,z_2\dots,z_n)=(\bx|\bz),\]
where $x_j,z_j$ are elements of $\F_2$ that are determined as below
\[
\begin{array}{c|cccc} \Gs_j &I&X&Y&Z\\ \hline
x_j&0&1&1&0\\
z_j&0&0&1&1
\end{array}\]
Furthermore, we define a $2n\times 2n$ matrix over $\F_2$
\[\Lambda=\left(\begin{array}{cc}O&I\\ I &O\end{array}\right),\]
where $O$ is the $n\times n$ zero matrix and $I$ is the $n\times n$ identity matrix. Then it is easy to see that, for two elements $g,h\in \mG_n$, $gh=hg$ if and only if $\psi(g)\Lambda\psi(h)^T=0$, i.e., $\psi(g)$ and $\psi(h)$ are symplectic self-orthogonal. Through the $k$ generators $\{g_1,g_2,\dots,g_{k}\}$, we define an $k\times 2n$ matrix over $\F_2$
\[H=\left(\begin{array}{c}\psi(g_1)\\ \cdot\\ \cdot\\ \cdot\\ \psi(g_k)\end{array} \right).\]
It is easy to see that $H$ has rank $k$. Since $\mS$ is abelian, we have $H\Lambda H^T={\bf 0}$. Thus, the binary code $C$ with $H$ as a generator matrix is symplectic self-orthogonal.

Now we briefly review decoding of quantum stabilizer codes. Consider an $[[n,n-k]]$-quantum  stabilizer code $Q_{\mS}$ as defined above. Assume that a state of $n-k$ quibits is encoded into a coded state $|\alpha\rangle$ of $n$ qubits. Let $\rho=|\alpha\rangle\langle\Ga|$ be the channel input and let $E\rho E^{\dag}$ be the channel output with error $E\in\mG_n$, where $E^{\dag}$ denotes the Hermitian conjugation of $E$ . By computing the syndrome measurements of the received state, one can determine the binary syndrome $\bs$ which is equal to $\psi(E)\Lambda H^T$ (see \cite{KL}). To decode, i.e., recover the channel input $\rho$, it is sufficient to determine  the error $E$. On the other hand, finding $E$ can be reduced to finding $\psi(E)$  (note that the scalar $i^m$ does not affect error). Thus, we turn the problem of decoding quantum stabilizer codes into decoding of $C^{\perp_S}$ (to see this, we notice that $H$ is a parity-check matrix of  $C^{\perp_S}$).
Assume that $E$ has at most $t$ errors, i.e., in the representation $E=i^m\Gs_1\otimes\Gs_2\otimes\cdots\otimes\Gs_n$, there are at most $t$ indices $j$ such that $\Gs_j\not=I$. Thus, the corresponding binary vector $\psi(E)$ has symplectic weight at most $t$. This implies that we have to find an error $\be=\psi(E)\in\F_2^{2n}$ such that $\wt_S(\be)\le t$ and $\be \Lambda H^T=\bs$. This is exactly the decoding problem of classical codes. To list decode $Q_{\mS}$, we can find the list of all vectors $\be\in\F_2^{2n}$ such that $\wt_S(\be)\le t$ and $\be \Lambda H^T=\bs$. In other words, if $\bx_0$ is a solution of $\bx \Lambda H^T=\bs$, then we have to find all codewords $\bc\in C^{\perp_S}$ such that $\wt_S(\bc-\bx_0)\le t$.

\subsection{Upper bound on list decodability of symplectic self-orthogonal codes}
Recall that the list decodability of classical block codes is upper bounded by  the classical Gilbert-Varshamov bound (\cite{Gru01}).  In this subsection, we show a similar result for symplectic self-orthogonal codes, namely, the list decodability of symplectic self-orthogonal codes is upper bounded by the quantum Gilbert-Varshamov bound.

First, we have to give a formal definition of list decoding for a symplectic dual-containing code.

\begin{defn}\label{def:1.2}{\rm For  a prime $q\ge 2$, an integer $L\ge 1$ and a real $\Gd\in (0,1/2)$, a $q$-ary symplectic self-orthogonal code $C$ of length $2n$ over a code alphabet  $\F_q$ is called $(\Gd,L)$-list-decodable if, for every point $\bx\in \F_q^{2n}$, there are at most $L$ codewords in $C^{\perp_S}$ whose symplectic distance from $\bx$ is at most $\Gd n$.
}\end{defn}

Note that list decoding of $C$ is in fact list decoding of its symplectic dual $C^{\perp_S}$.
\begin{thm}\label{thm:upper}
For every prime power $q$ and a real $\delta \in (0,1/2)$, a $q$-ary symplectic self-orthogonal code $C$ of length $2n$, decoding radius $\delta$ and
rate $R>1-H_q(\delta)-\Gd\log_q(q+1)+o(1)$ must have an exponential list size in $n$.
\end{thm}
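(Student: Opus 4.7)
The plan is to reduce the statement to the classical list-decoding volume upper bound over the alphabet $\F_{q^2}$. Fix any $\F_q$-basis of $\F_{q^2}$ and use it to set up an $\F_q$-linear bijection $\F_q^{2n}\leftrightarrow \F_{q^2}^n$ that groups the coordinates of $(\bu|\bv)$ into pairs $(u_i,v_i)\in\F_q^2\cong \F_{q^2}$. Under this identification, the symplectic weight $\wt_S(\bu|\bv)=\#\{i:(u_i,v_i)\ne(0,0)\}$ becomes exactly the Hamming weight of the corresponding vector in $\F_{q^2}^n$, and a symplectic ball of radius $\delta n$ becomes a Hamming ball of the same radius in $\F_{q^2}^n$, with volume
\[
V_{q^2}(n,\delta)\ge (q^2)^{n(H_{q^2}(\delta)-o(1))},
\]
valid because $\delta<1/2\le 1-1/q^2$ for every prime power $q\ge 2$.

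Next I would run the standard averaging argument on $C^{\perp_S}$, viewed as a (not necessarily $\F_{q^2}$-linear, but that is irrelevant) subset of $\F_{q^2}^n$. Since $C$ has $\F_q$-dimension $k=n(1-R)$ (from $R=(n-k)/n$), its symplectic dual satisfies $|C^{\perp_S}|=q^{2n-k}=q^{n(1+R)}$. Averaging over a uniformly random center $\bx\in \F_q^{2n}$,
\[
\E_{\bx}\bigl[\,|C^{\perp_S}\cap B_S(\bx,\delta n)|\,\bigr]
=\frac{|C^{\perp_S}|\cdot V_{q^2}(n,\delta)}{q^{2n}}
\ge q^{\,n(1+R)\,+\,2nH_{q^2}(\delta)\,-\,2n\,-\,o(n)}
= q^{\,n(R-1+2H_{q^2}(\delta))-o(n)}.
\]
Now invoke the identity $H_{q^2}(\delta)=\tfrac12 H_q(\delta)+\tfrac12\delta\log_q(q+1)$ already recorded in Subsection~I.A, so that $2H_{q^2}(\delta)=H_q(\delta)+\delta\log_q(q+1)$, and the exponent becomes
\[
n\bigl(R-(1-H_q(\delta)-\delta\log_q(q+1))\bigr)-o(n).
\]

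To finish, suppose $R>1-H_q(\delta)-\delta\log_q(q+1)+o(1)$; that is, $R\ge 1-H_q(\delta)-\delta\log_q(q+1)+\varepsilon$ for some fixed $\varepsilon>0$ (which absorbs the $o(1)$). Then the displayed expectation is at least $q^{\varepsilon n-o(n)}$, hence grows exponentially in $n$. Since the maximum is at least the average, some $\bx\in\F_q^{2n}$ has $|C^{\perp_S}\cap B_S(\bx,\delta n)|\ge q^{\Omega(n)}$, so any $L$ with the $(\delta,L)$-list-decodability property (in the sense of Definition~\ref{def:1.2}) must be exponential in $n$. The only bookkeeping to watch is the correct use of the entropy identity and the rate convention $R=(n-k)/n$; once these are lined up, the argument is a direct volume-packing bound and there is no substantive obstacle beyond verifying that symplectic weight on $\F_q^{2n}$ really coincides with Hamming weight on $\F_{q^2}^n$ under the chosen identification.
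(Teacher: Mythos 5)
Your argument is correct and is essentially the same as the paper's: both average $|C^{\perp_S}\cap B^S_{2n}(\bx,\delta)|$ over a uniformly random center $\bx$, compute the expectation as $|C^{\perp_S}|\cdot|B^S_{2n}(\bo,\delta)|/q^{2n}$, and conclude it grows exponentially once $R$ exceeds the quantum Gilbert--Varshamov threshold. The paper simply quotes the symplectic ball volume $q^{n(H_q(\delta)+\delta\log_q(q+1))-o(n)}$ directly, whereas you derive it by the $\F_{q^2}$-identification and the entropy identity $2H_{q^2}(\delta)=H_q(\delta)+\delta\log_q(q+1)$; this is the same computation made explicit, not a different route.
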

\begin{proof}
Let $k$ be the dimension of $C$. Then the rate of $C$ is $R=(n-k)/n$.  Pick up a random word $\bx\in\F_q^{2n}$ and consider the random variable $X:=|B_{2n}^{S}(\bx,\delta)\cap C^{\perp_S}|$,     where $B_{2n}^{S}(\bx,\delta)$ is the symplectic ball of radius $\Gd n$, i.e., $B_{2n}^S(\bx,\delta)$ consists of all vectors of $\F_q^{2n}$ that have symplectic distance at most $\Gd n$ from $\bx$. The expected value of $X$ is clearly $|C^{\perp_S}|\cdot |B_{2n}^{S}(\bo,\delta)|/q^{2n}$ which is at least
\begin{eqnarray*}
&&q^{2n-k}\times q^{nt(H_q(\Gd)+\Gd\log_q(q+1))}\times q^{-2n}\\ &&=q^{n(R-(1-H_q(\Gd)-\Gd\log_q(q+1)))}=\Omega(\exp(n)).
\end{eqnarray*}
This completes the proof.
\end{proof}

\subsection{List decoding random  symplectic self-orthogonal codes}

Now we state the list decodability of random  symplectic self-orthogonal codes below.
\begin{thm}(Main Theorem II)\label{thm:2}
For every prime power $q$ and a real $\delta \in (0,1/2)$, there exists a constant
$M_\delta$,  such that for every small $\varepsilon>0$ and all large
enough $n$, a $q$-ary random symplectic self-orthogonal code $C$ of length $2n$ and
rate $R=1-H_q(\delta)-\Gd\log_q(q+1)-\varepsilon$  is
$(\delta,\dfrac{M_\delta}{\varepsilon})$-list-decodable with
probability $1-q^{-n}$.
\end{thm}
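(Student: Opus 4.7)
The plan is to mirror the proof of Theorem \ref{thm:5}, working in the symplectic setting over $\F_q^{2n}$. The natural bijection $\F_q^{2n}\cong\F_{q^2}^n$ as $\F_q$-vector spaces carries the symplectic weight to the Hamming weight (viewing each pair $(u_i,v_i)\in\F_q^2$ as a single coordinate in $\F_{q^2}$). Hence the symplectic ball $B^S_{2n}(\bo,\delta)$ has size at most $q^{2nH_{q^2}(\delta)}=q^{n(H_q(\delta)+\delta\log_q(q+1))}$ via the identity $H_{q^2}(x)=\tfrac{1}{2}H_q(x)+\tfrac{1}{2}x\log_q(q+1)$, and this cardinality precisely matches the rate correction appearing in the quantum Gilbert--Varshamov bound.

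First I would set $L=\lceil M_\delta/\varepsilon\rceil$ with $M_\delta$ chosen at the end. A standard averaging over translates reduces the goal $\Pr_C[\exists \bx:\ |B^S_{2n}(\bx,\delta)\cap C^{\perp_S}|\ge L]\le q^{-n}$ to the joint estimate $\Pr_{C,\bx}[|B^S_{2n}(\bx,\delta)\cap C^{\perp_S}|\ge L]\le q^{-n-k}$, where $k=n(1-R)=\dim C$; the shift factor $q^{-k}$ arises because a bad $\bx$ remains bad under translation by any element of $C^{\perp_S}$, which has dimension $2n-k$. Using $|B^S_{2n}(\bx,\delta)\cap C^{\perp_S}|=|B^S_{2n}(\bo,\delta)\cap(C^{\perp_S}-\bx)|\le|B^S_{2n}(\bo,\delta)\cap\operatorname{span}(C^{\perp_S},\bx)|$, I would pass to the random $(2n-k+1)$-dimensional subspace $D^*:=\operatorname{span}(C^{\perp_S},\bx)$. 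Its symplectic dual $(D^*)^{\perp_S}$ has dimension $k-1$ and is a subspace of the symplectic self-orthogonal code $C$, and hence is itself symplectic self-orthogonal. Let $\mE^*$ denote the collection of all $(2n-k+1)$-dimensional subspaces of $\F_q^{2n}$ whose symplectic duals are symplectic self-orthogonal; then $D^*\in\mE^*$.

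Next I would establish two auxiliary lemmas in parallel with Lemmas \ref{lem:3} and \ref{lem:4}. The symplectic analog of Lemma \ref{lem:3} asserts that for $X_1,\dots,X_t$ drawn independently and uniformly from $B^S_{2n}(\bo,\delta)$, the probability that $|\operatorname{span}_{\F_q}\{X_1,\dots,X_t\}\cap B^S_{2n}(\bo,\delta)|\ge Mt$ is at most $q^{-(6-o(1))n}$; this follows by applying Lemma \ref{lem:3} over the alphabet $\F_{q^2}$ at length $n$, since the $\F_q$-span sits inside the $\F_{q^2}$-span while the ball is invariant under the identification. The symplectic analog of Lemma \ref{lem:4} must yield a bound on $\Pr_{D^*\in\mE^*}[\{\bv_1,\dots,\bv_t\}\subseteq D^*]$ for any $t$ linearly independent $\bv_1,\dots,\bv_t$. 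By the duality $D^*\leftrightarrow(D^*)^{\perp_S}$, this reduces to counting $(k-1)$-dimensional symplectic self-orthogonal subspaces contained in $\{\bv_1,\dots,\bv_t\}^{\perp_S}$ and dividing by $|\mD_{k-1}|$, where $\mD_{k-1}$ denotes the set of all $(k-1)$-dimensional symplectic self-orthogonal subspaces of $\F_q^{2n}$. Because every vector is symplectic self-orthogonal, these counts reduce to clean products of Gaussian-binomial-type terms $(q^{2n-2i+2}-1)$ and $(q^i-1)$, and the even/odd characteristic dichotomy of the Euclidean case (Lemmas \ref{lem:7}--\ref{lem:10}) disappears entirely.

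With these two lemmas in hand, the endgame reproduces the proof of Theorem \ref{thm:5} almost verbatim. Let $\mF_t$ be the set of $t$-tuples of linearly independent vectors in $B^S_{2n}(\bo,\delta)$ whose $\F_q$-span meets the ball in at least $L$ points. Bounding $\sum_t |\mF_t|\cdot \Pr_{D^*\in\mE^*}[\{\bv\}\subseteq D^*]$ and splitting at $t=5/\varepsilon$, the regime $t<5/\varepsilon$ gains a factor $q^{-5n}$ from the symplectic correlation lemma, while the regime $t\ge 5/\varepsilon$ uses the trivial bound $|\mF_t|\le q^{nt(H_q(\delta)+\delta\log_q(q+1))}$ and gains a factor $q^{-\varepsilon tn}$ per term against the probability bound. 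Substituting $R=1-H_q(\delta)-\delta\log_q(q+1)-\varepsilon$ and taking $n$ sufficiently large then produces the target $q^{-n-k}$, completing the argument. The principal technical obstacle will be the refined counting of symplectic self-orthogonal subspaces contained in $\{\bv_1,\dots,\bv_t\}^{\perp_S}$: the induced symplectic form on this subspace can be degenerate, with radical dimension depending on whether $\{\bv_i\}$ is itself isotropic, and carefully stratifying on this radical is what pins down the exact constants in the analog of Lemma \ref{lem:4}.
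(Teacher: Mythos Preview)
Your proposal is correct and follows essentially the same route as the paper: reduce to a random $D\in\mS_{k-1}$ via $\operatorname{span}(C^{\perp_S},\bx)$, apply Lemma~\ref{lem:3} over the alphabet $\F_{q^2}$ (the paper states this as Lemma~\ref{lem:a1} with bound $q^{-2(6-o(1))n}$, and accordingly takes $M_\delta=4M$ with the split at $t=4/\varepsilon$ rather than your $5/\varepsilon$), and then invoke the counting estimate $\Pr_{D}[\{\bv\}\subseteq D^{\perp_S}]\le q^{-(k-1)t}$ of Lemma~\ref{lem:a2}. The paper does not carry out the radical stratification you flag as the principal obstacle and instead simply asserts that the number of isotropic $k$-subspaces of $V^{\perp_S}$ is bounded by the formula for a nondegenerate $(2n-t)$-dimensional symplectic space; your more careful stratified count is a cleaner route to the same $q^{-kt}$ estimate (up to factors polynomial in $q^t$, harmless since $t\le L$), so the obstacle is lighter than you anticipate.
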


The proof of Theorem \ref{thm:2} is exactly similar to the one of Theorem \ref{thm:5} except for the different counting of symplectic self-orthogonal codes. For preparation, we give two lemmas that are needed for the proof of Theorem \ref{thm:2}.

 By considering Hamming ball over alphabet size $q^2$, we get a similar result as in Lemma \ref{lem:3}.

\begin{lemma}\label{lem:a1}
For every $\delta\in(0,1-1/q)$, there is a constant $M>1$ such that
for all $n$ and all $t=o(\sqrt{n})$, if $X_1,\dots, X_t$ are picked
independently and uniformly at random from $B_{2n}^{S}(\bo,\delta)$, then
\[\Pr[|{\rm span}({X_1,\dots, X_{t}})\cap B_{2n}^S(\bo,\delta)|\geq M\cdot t]\leq q^{-2(6-o(1))n}.\]
\end{lemma}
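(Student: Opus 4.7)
The plan is to reduce Lemma \ref{lem:a1} to Lemma \ref{lem:3} applied over the extension field $\F_{q^2}$, following the hint the paper gives when it states the result. First I would fix a primitive element $\alpha$ of $\F_{q^2}$ over $\F_q$ and introduce the $\F_q$-linear bijection
\[
\phi \colon \F_q^{2n} \longrightarrow \F_{q^2}^n, \qquad (u_1,\dots,u_n\,|\,v_1,\dots,v_n) \longmapsto (u_1+\alpha v_1,\,\dots,\,u_n+\alpha v_n).
\]
Since $\alpha \notin \F_q$, the coordinate $u_i + \alpha v_i$ vanishes in $\F_{q^2}$ if and only if $(u_i,v_i)=(0,0)$ in $\F_q^2$, so $\phi$ converts the symplectic weight on $\F_q^{2n}$ into the Hamming weight on $\F_{q^2}^n$. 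In particular, $\phi$ carries $B_{2n}^S(\bo,\delta)$ bijectively onto the $\F_{q^2}$-ary Hamming ball $B_n(\bo,\delta)\subseteq \F_{q^2}^n$, and the uniform distribution on the former is pushed forward to the uniform distribution on the latter.

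Next I would compare spans. The $\F_q$-span of $X_1,\dots,X_t$ in $\F_q^{2n}$ is mapped by $\phi$ onto the $\F_q$-span of $\phi(X_1),\dots,\phi(X_t)$ in $\F_{q^2}^n$, which in turn is contained in the $\F_{q^2}$-span of those vectors. Because $\phi$ is a weight-preserving bijection, this yields the set inclusion
\[
|\mathrm{span}_{\F_q}(X_1,\dots,X_t) \cap B_{2n}^S(\bo,\delta)| \;\le\; |\mathrm{span}_{\F_{q^2}}(\phi(X_1),\dots,\phi(X_t)) \cap B_n(\bo,\delta)|.
\]
Since $\delta\in(0,1-1/q)\subseteq(0,1-1/q^2)$, Lemma \ref{lem:3} applies over the alphabet $\F_{q^2}$ and produces a constant $M>1$ (depending on $q$ and $\delta$) such that, for $t=o(\sqrt{n})$,
\[
\Pr\bigl[|\mathrm{span}_{\F_{q^2}}(\phi(X_1),\dots,\phi(X_t)) \cap B_n(\bo,\delta)| \ge M\cdot t\bigr] \;\le\; (q^2)^{-(6-o(1))n} \;=\; q^{-2(6-o(1))n}.
\]
Combined with the previous containment, this is exactly the bound claimed in Lemma \ref{lem:a1}.

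I do not foresee a serious obstacle: the argument is essentially a transport of Lemma \ref{lem:3} along the bijection $\phi$. The only point of care is that $\phi$ is only $\F_q$-linear, so the $\F_q$-span of the $X_i$'s corresponds to an $\F_q$-subspace, not an $\F_{q^2}$-subspace, of $\F_{q^2}^n$. Enlarging this to the $\F_{q^2}$-span is a harmless upper bound and is precisely the object that Lemma \ref{lem:3} controls over $\F_{q^2}$, so neither the constant $M$ nor the exponent $-2(6-o(1))n$ is lost in the reduction.
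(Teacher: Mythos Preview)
Your reduction via the $\F_q$-linear bijection $\phi:\F_q^{2n}\to\F_{q^2}^n$, which identifies the symplectic ball with the Hamming ball over $\F_{q^2}$ and then bounds the $\F_q$-span by the $\F_{q^2}$-span, is correct and is exactly the approach the paper indicates (it merely asserts ``by considering Hamming ball over alphabet size $q^2$'' without supplying the details you have written out). Your remark that enlarging to the $\F_{q^2}$-span is a harmless one-sided bound is the only nontrivial point, and you handle it properly.
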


Next we prove a result on probability for a symplectic dual-containing code containing  a given  set $\{\bv_1,\dots,\bv_t\}$ of linearly independent vectors in $\F_q^{2n}$.

Let $\mS_k$ denote the set of $k$-dimensional symplectic self-orthogonal codes in $\F_q^{2n}$.

\begin{lemma}\label{lem:a2}
For any linearly independent vectors $\bv_1,\dots,\bv_t$ in $\F_q^{2n}$, the probability that a random code $C$ from
$\mS_k$ with $C^{\perp_S}$ containing $\{\bv_1,\dots,\bv_t\}$ satisfies
\begin{equation}\label{eq:a1} \Pr_{C\in\mS_k}[\{\bv_1,\dots,\bv_t\}\subseteq C^{\perp_S}]\le  q^{-kt}.\end{equation}
 \end{lemma}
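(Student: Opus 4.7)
The plan is to reduce the probability to a subspace count and bound that count via a flag argument. Setting $V:=\{\bv_1,\dots,\bv_t\}^{\perp_S}$ (which has dimension $2n-t$ by nondegeneracy of the symplectic form and linear independence of the $\bv_i$'s), the event $\{\bv_1,\dots,\bv_t\}\subseteq C^{\perp_S}$ is just $C\subseteq V$, so the probability equals $|\{C\in\mS_k:\;C\subseteq V\}|/|\mS_k|$. Since every $k$-dimensional subspace admits the same number of ordered bases, this ratio is also the ratio of ordered symplectic-isotropic $k$-flags contained in $V$ to all ordered symplectic-isotropic $k$-flags in $\F_q^{2n}$.

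Counting flags step by step, at step $j$ the vector $c_j$ lies in $\mathrm{span}(c_1,\dots,c_{j-1})^{\perp_S}\setminus\mathrm{span}(c_1,\dots,c_{j-1})$, a set of size $q^{2n-j+1}-q^{j-1}$. The extra constraint $c_j\in V$ means $c_j\in(W+\mathrm{span}(c_1,\dots,c_{j-1}))^{\perp_S}$, where $W:=\mathrm{span}(\bv_1,\dots,\bv_t)$; writing $s_{j-1}:=\dim(W\cap\mathrm{span}(c_1,\dots,c_{j-1}))$, this restricted set has size $q^{2n-t-j+1+s_{j-1}}-q^{j-1}$, and the elementary inequality $(a-c)/(b-c)\le a/b$ (for $0<a\le b$, $c>0$) bounds the per-step conditional probability by $q^{-t+s_{j-1}}$.

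The main obstacle is that the naive product $\prod_{j=1}^{k}q^{-t+s_{j-1}}=q^{-kt+\sum_j s_j}$ may exceed $q^{-kt}$ whenever some $c_j$ falls into $W$---necessarily into the radical $R:=W\cap W^{\perp_S}$ of dimension $r\le t$. To recover the sharp bound I would aggregate globally: since $V/R$ is a nondegenerate symplectic space of dimension $2m=2n-t-r$, partitioning isotropic $C\subseteq V$ by $s:=\dim(C\cap R)$ and lifting gives the closed-form count
\[
|\{C\in\mS_k:\;C\subseteq V\}|=\sum_{s=0}^{\min(k,r)}\binom{r}{s}_{\!q}\,q^{(k-s)(r-s)}\,N(m,k-s),
\]
where $N(m,j):=\prod_{i=0}^{j-1}(q^{2m-2i}-1)/(q^{j-i}-1)$ counts $j$-dim isotropic subspaces in a nondegenerate symplectic space of dimension $2m$, and in particular $|\mS_k|=N(n,k)$.

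The bound then follows by termwise estimation. For $s=0$: applying $(q^a-1)/(q^b-1)\le q^{a-b}$ (for $a\le b$) to each of the $k$ factors of $N(m,k)/N(n,k)$ gives $N(m,k)/N(n,k)\le q^{-k(t+r)}$, so the $s=0$ contribution is at most $q^{kr}\cdot q^{-k(t+r)}=q^{-kt}$. The $s\ge 1$ terms carry additional $q^{-\Theta(n)}$ factors from $N(n,k-s)/N(n,k)$ that get absorbed into the strict slack $(1-q^{-(2m-2i)})/(1-q^{-(2n-2i)})<1$ discarded in the crude $s=0$ estimate; rigorously matching these factors---via the refined identity $(q^a-1)/(q^b-1)=q^{a-b}(1-q^{-a})/(1-q^{-b})$ and tracking the cancellation---is the key technical step.
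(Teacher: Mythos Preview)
Your reduction matches the paper's: both observe that $\{\bv_i\}\subseteq C^{\perp_S}$ is equivalent to $C\subseteq W^{\perp_S}$ (with $W={\rm span}\{\bv_1,\dots,\bv_t\}$), so the probability is a ratio of isotropic-subspace counts. The paper then simply asserts that the number of $k$-dimensional isotropic subspaces of $W^{\perp_S}$ is bounded by the expression one gets from the formula for $|\mS_k|$ after replacing $2n$ by $2n-t$, and divides. You are right to flag this step: when the symplectic form restricted to $W^{\perp_S}$ is degenerate (i.e.\ $R=W\cap W^{\perp_S}\neq 0$), the paper's intermediate bound is unjustified and in fact false---for $q=2$, $n=3$, $t=1$, $k=2$ the actual count is $75$, while the paper's claimed upper bound is $217/3$.

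That said, your proposed fix---absorbing the $s\ge1$ contributions into the slack discarded in the $s=0$ estimate---cannot work in the stated generality, because the lemma itself is \emph{false} without extra hypotheses. Take $q=2$, $n=2$, $k=t=2$, with $\bv_1,\bv_2$ spanning a Lagrangian $W\subset\F_2^4$. Then $W^{\perp_S}=W$, the only $C\in\mS_2$ contained in $W$ is $W$ itself, and $|\mS_2|=15$, giving probability $1/15>1/16=q^{-kt}$. In your decomposition this is the extreme case $r=t$, $m=0$: the $s=0$ term vanishes (since $N(0,k)=0$) and the entire count sits at $s=k$, so there is no slack into which it can be absorbed. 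No refinement of the identity $(q^a-1)/(q^b-1)=q^{a-b}(1-q^{-a})/(1-q^{-b})$ will manufacture the missing factor.

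What your decomposition \emph{does} give, once each term is bounded honestly, is that the $s$th term is of order $q^{s(2k-2n+t+r)}$ relative to the $s=0$ bound $q^{-kt}$. Hence in the regime $t<n-k$---which is the only regime needed in the proof of Theorem~\ref{thm:2}, where $t\le L=O(1/\varepsilon)$ is constant and $n-k=Rn$---the $s\ge1$ terms are genuinely lower order and one obtains $\Pr\le C_{q,t}\,q^{-kt}$ for a constant depending only on $q$ and $t$. The correct move is therefore to add a mild hypothesis (e.g.\ $t\le n-k$) or a harmless extra factor to the statement, prove that version via your partition by $s=\dim(C\cap R)$, and note that it suffices downstream.
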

\begin{proof} Let us first compute the size of $\mS_k$.
Note that every element of $\F_q^{2n}$ is symplectic self-orthogonal. Thus, every $k$-dimensional self-orthogonal code $C_k\in\mS_k$ is spanned from a $k-1$-dimensional self-orthogonal code $C_{k-1}\in\mS_{k-1}$ by adding a vector in $C_{k-1}^{\perp_S}\backslash C_{k-1}$. Given the fact that, for two vectors $\bu,\bv\not\in C_{k-1}$, $\langle \bu, C_{k-1}\rangle=\langle \bv, C_{k-1}\rangle$ if and only if $\bu-\Gl \bv\in C_{k-1}$ for some nonzero $\Gl\in \F_q$,
 we know that the number  of symplectic self-orthogonal  codes $C_k$ containing a fixed symplectic self-orthogonal $C_{k-1}$ is $(q^{2n-2k+2}-1)/(q-1)$.  On the other hand, every $k$-dimensional symplectic self-orthogonal code contains exactly $(q^k-1)/(q-1)$ symplectic self-orthogonal spaces of dimension $k-1$. This gives the recursive formula $|\mS_k|(q^k-1)/(q-1)=|\mS_{k-1}|(q^{2n-2k+2}-1)/(q-1)$.
From this recursive formula, we obtain
\begin{equation}\label{eq:a2} |\mS_k|=\frac{(q^{2n-2k+2}-1)(q^{2n-2k+4}-1)\cdots(q^{2n}-1)}{(q^{k}-1)(q^{k-1}-1)\cdots(q-1)}.\end{equation}
Let $V$ be the linear span of  $\bv_1,\dots,\bv_t$. Then $C^{\perp_S}$ contains $\bv_1,\dots,\bv_t$ in $\F_q^{2n}$ if and only if $C$ is a subspace of $V^{\perp_S}$. Thus, the number of symplectic self-orthogonal codes $C$ with $C^{\perp_S}$ containing $\{\bv_1,\dots,\bv_t\}$ is in fact the number of symplectic self-orthogonal codes in $V^{\perp_S}$. Since $\dim V^{\perp_S}=2n-t$, by (\ref{eq:a2}) this number is
 at most
\begin{equation}\label{eq:a3} \frac{(q^{2n-t-2k+2}-1)(q^{2n-t-2k+4}-1)\cdots(q^{2n-t}-1)}{(q^{k}-1)(q^{k-1}-1)\cdots(q-1)}.\end{equation}
Dividing (\ref{eq:a3}) by (\ref{eq:a2}) gives the desired result.
\end{proof}

{\it\bf Proof of Theorem \ref{thm:2}}: Pick $M_{\Gd}=4M$, where $M$ is the constant in Lemma \ref{lem:a1}.
Put $L=\lceil M_\delta/\epsilon\rceil$. Assume that $n$ is sufficiently
large.

Let $C$ be a random symplectic self-orthogonal code with rate $R$, i.e., dimension $k$ of $C$ satisfies $k=(1-R)n$ in
$\F_q^{2n}$. To show that $C^{\perp_S}$ is
$(\delta,\dfrac{M_\delta}{\varepsilon})$-list-decodable with high
probability, it is sufficient to show that with low probability that
$C^{\perp_S}$ is not  $(\delta,\dfrac{M_\delta}{\varepsilon})$-list-decodable,
i.e.,

\begin{equation}\label{eq:a5}\Pr_{C\in\mS_{k}}[\exists \bx\in \F_q^{2n}\ \mbox{such that}\ |B_{2n}^S(\bx,\delta)\cap C^{\perp_S}|\geq L]<q^{- n}.\end{equation}

Thus, from now on we only need to prove that
\begin{equation}\label{eq:a6}\Pr_{C\in\mS_{k},\bx\in\F_q^{2n}}[|B_{2n}^S(\bx,\delta)\cap C^{\perp_S}|\geq L]<q^{-n}\cdot q^{k-2n}.\end{equation}

 Furthermore, the probability at the left side of (\ref{eq:a6}) can be transformed into the following.
\begin{eqnarray}
&&\Pr_{C\in\mS_{k},\bx\in\F_q^n}[|B_{2n}^S(\bx,\delta)\cap C^{\perp_S}|\geq L]\\&& \leq \Pr_{D\in\mS_{k-1}}[|B_{2n}^S(\bo,\delta)\cap D^{\perp_S}|\geq L]\\
&&=\sum_{t=\lceil \log_q L\rceil} ^L\sum_{(\bv)\in \mF_t}\Pr_{D\in\mS_{k-1}}[D^{\perp_S}\supseteq \{\bv\}]\\
&&\leq \sum_{t=\lceil \log_q L\rceil} ^L|\mF_t|\cdot
q^{-kt},\quad \mbox{by
(\ref{eq:a1})}\label{eq:a10}
\end{eqnarray}
where $D^{\perp_S}$ is a random $2n-k+1$ dimensional subspace containing ${\rm
span}\{C^{\perp_S}, \bx\}$ and $\mF_t$ is defined in the proof of Theorem \ref{thm:5}. Note that we use  the fact that ${\rm
span}\{C^{\perp_S}, \bx\}$ is symplectic dual-containing whenever $C^{\perp_S}$ is.

(1) If $t< 4/\epsilon$, then
\[\dfrac{|\mF_t|}{|B_{2n}^S(\bo,\delta)|^t}\leq \Pr[|{\rm span}({X_1,\dots, X_t})\cap B_{2n}^S(\bo,\delta)|\geq L].\]

 Since $L\geq M\cdot t$, by Lemma \ref{lem:a1} we have
 \begin{eqnarray*}
 &&|\mF_t|\leq |B_{2n}^S(\bo,\delta)|^t\cdot q^{-10n} \leq q^{2ntH_{q^2}(\delta){-10n}}\\
 &&=q^{nt(H_q(\Gd)+\Gd\log_q(q+1))-10n}.
 \end{eqnarray*}

(2) If $t\geq 4/\epsilon$, then we  have $|\mF_t|\leq
|B_{2n}^S(\bo,\delta)|^t=q^{nt(H_q(\Gd)+\Gd\log_q(q+1))}$ which is just a trivial
bound.

\hspace{0.5cm}

Finally,  substituting the value of $k=(1-R)n$ and $R=1-H_q(\delta)-\Gd\log_q(q+1)-\epsilon$ into
the inequality (\ref{eq:a10}), we get
\[\begin{array}{lll}
&&\Pr_{C\in\mS_{k},\bx\in\F_q^n}[|B_{2n}^S(\bx,\delta)\cap C^{\perp_S}|\geq L]\\
&&\leq
  \sum_{t=\lceil \log_q L\rceil}^{\lceil 4/\epsilon\rceil-1}q^{nt(H_q(\Gd)+\Gd\log_q(q+1))-10n}\cdot q^{-kt}\\
&&+\sum_{t=\lceil 4/\epsilon\rceil}^Lq^{nt(H_q(\Gd)+\Gd\log_q(q+1))}\cdot q^{-kt}\\

&&\leq\sum_{t=\lceil \log_q L\rceil}^{\lceil 4/\epsilon\rceil-1} q^{-\epsilon nt-10n}+\sum_{t=\lceil 4/\epsilon\rceil}^Lq^{-\epsilon nt}
\\

&&\le q^{-n}\cdot q^{k-2n}.
\end{array} \]
This completes the proof.

\section*{Acknowledgments}
The authors are grateful to the anonymous referees and Professor Dr. Alexei Ashikhmin for their invaluable and constructive
comments and suggestions which have greatly improved the structure and presentation of this paper and make this paper more readable.

{\bf Lingfei Jin} received her B.A. degree in mathematics from Hefei University of
Technology, China in 2009. In 2013, she received her Ph.D. degree from Nanyang Technological University, Singapore. Currently, she is an associate Professor at the School of Computer Science, Fudan University, China.
Her research interests include quantum information, cryptography and
coding theory.\\ \\
{\bf Chaoping Xing} received his Ph.D. degree in 1990 from University of Science
and Technology of China. From 1990 to 1993 he was a lecturer and associate
professor in the same university. He joined University of Essen, Germany as an
Alexander von Humboldt fellow from 1993 to 1995. After this he spent most
time in Institute of Information Processing, Austrian Academy of Sciences until
1998. From March of 1998 to November of 2007, he was working in National
University of Singapore. Since December of 2007, he has been with Nanyang
Technological University and currently is a full Professor. Dr. Xing has been
working on the areas of algebraic curves over finite fields, coding theory, cryptography
and quasi-Monte Carlo methods, etc.\\ \\
{\bf Xiande Zhang} received the Ph.D. degree in mathematics from Zhejiang University, Hangzhou, Zhejiang, P. R. China in 2009. After that, she held postdoctoral positions in Nanyang Technological University and Monash University. Currently, she is a Research Fellow at the Division of Mathematical Sciences, School of Physical and Mathematical Sciences, Nanyang Technological University, Singapore.  Her research interests include combinatorial design theory, coding theory, cryptography, and their interactions.

\end{document}